\theoremstyle{definition}
\newtheorem{definition}{Definition}
\newtheorem{theorem}{Theorem}
\newtheorem{remark}{Remark}
\newtheorem{observation}{Observation}
\newcommand\NoDo{\renewcommand\algorithmicdo{}}
\newcommand\NoThen{\renewcommand\algorithmicthen{}}
\DeclarePairedDelimiter\ceil{\lceil}{\rceil}
\newcommand\blfootnote[1]{
  \begingroup
  \renewcommand\thefootnote{}\footnote{#1}
  \addtocounter{footnote}{-1}
  \endgroup}
\begin{document}

\ifthenelse{\boolean{extendedVersion}}{
\title{{\huge Knowledge Connectivity Requirements for Solving BFT Consensus with Unknown Participants and Fault Threshold\\
    (Extended Version)}}
}{
\title{{\huge Knowledge Connectivity Requirements for Solving BFT Consensus with Unknown Participants and Fault Threshold}}
}

\author{
    \IEEEauthorblockN{
        Hasan Heydari,
        Robin Vassantlal, and
        Alysson Bessani} 

        \IEEEauthorblockA{LASIGE, Faculdade de Ciências, Universidade de Lisboa, Portugal}
    
        \{hheydari, rvassantlal, anbessani\}@ciencias.ulisboa.pt
    }

\maketitle

\begin{abstract}
    Consensus is a fundamental building block for constructing reliable and fault-tolerant distributed services. 
    The increasing demand for high-performance and scalable blockchain protocols has brought attention to solving consensus in scenarios where each participant joins the system knowing only a subset of participants.
    In such scenarios, the participants' initial knowledge about the existence of other participants can collectively be represented by a directed graph known as \textit{knowledge connectivity graph}. 
    The \textit{Byzantine Fault Tolerant Consensus with Unknown Participants} (BFT-CUP) problem aims to solve consensus in those scenarios by identifying the necessary and sufficient conditions that the knowledge connectivity graphs must satisfy when a fault threshold is provided to all participants.
    This work extends BFT-CUP by eliminating the requirement to provide the fault threshold to the participants.
    We indeed address the problem of solving BFT consensus in settings where each participant initially knows a subset of participants, and although a fault threshold exists, no participant is provided with this information~-- referred to as \textit{BFT Consensus with Unknown Participants and Fault Threshold} (BFT-CUPFT).
    With this aim, we first demonstrate that the conditions identified for knowledge connectivity graphs by BFT-CUP are insufficient to solve BFT-CUPFT.
    Accordingly, we introduce a new type of knowledge connectivity graph that is sufficient for solving such a problem.
    To validate its sufficiency, we design a protocol for solving BFT-CUPFT.  
\end{abstract}

\begin{IEEEkeywords}
    Consensus with Unknown Participants, Byzantine Fault-Tolerance, Consensus, Blockchain.
\end{IEEEkeywords}

\ifbool{extendedVersion}{
    \blfootnote{This is a preprint of a paper to appear at the 44th IEEE International Conference on Distributed Computing Systems (ICDCS 2024).}
}{}

\section{Introduction}
\noindent\textbf{Context.} Consensus is a fundamental building block for constructing reliable and fault-tolerant distributed systems where participants agree on a common value out of the initially proposed values.
This problem has been primarily addressed in three settings~-- permissioned, permissionless, and hybrid.
In the permissioned setting, participants have a single global view of the system in advance, i.e., each participant is provided with the \textit{system's membership} and the \textit{fault threshold} (or, more generally, the fail-prone system~\cite{malkhi_1998}). 
Knowing such parameters simplifies the design and analysis of consensus protocols.

In the permissionless setting, protocols such as the Nakamoto consensus used in Bitcoin~\cite{nakamoto_2008} solve consensus without requiring a single global view of the system.
Specifically, no participant might be aware of the set of all participants.
Furthermore, the fault threshold is not explicitly defined in the same way as in the permissioned setting; instead, these protocols rely on assumptions about the overall distribution of resources within their networks, like computational power in Bitcoin.
Despite these protocols being scalable in terms of the number of participants, their performance is significantly lower by orders of magnitude compared to consensus protocols designed for the permissioned setting~\cite{vukolic_2015,korkmaz2022alder, Cachin2017BlockchainCP}.

The demand to scale consensus to accommodate numerous participants while maintaining high performance has led to the emergence of the hybrid setting. 
Protocols tailored for this setting relax the global view requirements found in permissioned consensus protocols, allowing each participant to have a partial view of the participants that it can trust or initially knows.
These protocols can be modeled in various ways. 
One approach is through the use of consensus with unknown participants (CUP)~\cite{cavin_2004, cavin_2005, greve_2007, alchieri_2008, alchieri_2016, vassantlal_2023}. Alternatively, federated Byzantine quorum systems~\cite{lokhava_2019,garcia_2018,mazieres_2015,garcia_2019}, personal Byzantine quorum systems~\cite{losa_2019}, quorum systems designed for permissionless networks~\cite{cachin_2022}, and heterogeneous quorum systems~\cite{li2023quorum} offer additional modeling options for hybrid consensus protocols.
This paper focuses on the CUP model.

\vspace{0.2em}
\noindent\textbf{The CUP model.}
In scenarios where each participant joins the network initially knowing only a subset of participants, the knowledge about the existence of other participants can collectively be represented by a directed graph known as \textit{knowledge connectivity graph}~\cite{cavin_2004}.
In such a graph, each vertex corresponds to a participant, and a directed edge from a vertex~$i$ to another vertex~$j$ denotes that participant~$i$ initially knows participant~$j$.
The graph depicted in Fig.~\ref{fig:osrExampleImpossible} is an example of a knowledge connectivity graph in which participant~$1$ initially knows participants~$2$,~$3$, and~$4$.

It is crucial to emphasize that \textit{the knowledge connectivity graph of a system might differ from its communication network}, as a participant~$i$ might be capable of communicating with another participant~$j$; however, such communication cannot happen when $i$ lacks knowledge about the existence of $j$ (i.e., there is no link from~$i$ to $j$ in their knowledge connectivity graph).

A knowledge connectivity graph must satisfy certain properties to allow participants to solve consensus. 
For instance, removing a faulty participant and its associated edges should not create multiple disconnected components, each having at least a correct participant.
This is because if the faulty participant remains silent during the execution, the correct participants in each disconnected component may decide on a value independently of the correct participants in other components.
This can result in multiple values being decided within the system, thereby failing to achieve consensus.
For example, in the graph of Fig.~\ref{fig:osrExampleImpossible}, in which only participant~$4$ is faulty, correct participants cannot solve consensus if participant~$4$ remains silent.

The CUP model determines the \textit{necessary} and \textit{sufficient} properties that a knowledge connectivity graph must satisfy to allow solving consensus under specific synchrony and fault assumptions. 
In this model, each participant is provided with the system's fault threshold and a subset of other participants when joining the network.
As synchrony and fault assumptions are relaxed, a participant's knowledge must be increased, i.e., the number of its outgoing edges in the knowledge connectivity graph must be increased.
For example, each participant in the Byzantine Fault-Tolerant~(BFT)~CUP~\cite{alchieri_2016} requires more knowledge than in the fault-free CUP model~\cite{cavin_2004}.   
The graph depicted in Fig.~\ref{fig:osrExamplePossible} illustrates an example of a knowledge connectivity graph that satisfies the requirements of the BFT-CUP model.

\begin{figure}[t!]
    \centering
    \begin{subfigure}[t]{0.48\textwidth}
        \centering
        \includegraphics[scale=0.8]{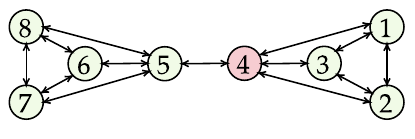}
        \caption{A knowledge connectivity graph that does not satisfy the requirements of the BFT-CUP model. 
        Although the number of Byzantine participants is less than one-third of the total participants, satisfying the requirement for solving the traditional Byzantine consensus~\cite{dwork_1988}, solving consensus in this system is impossible when participant~$4$ remains silent, as participants in $\{1, 2, 3\}$ cannot acquire knowledge about participants in $\{5, 6, 7, 8\}$, and vice versa.}
        \label{fig:osrExampleImpossible}
    \end{subfigure}
    \hfill
    \vspace{0.2em}
    \begin{subfigure}[t]{0.48\textwidth}
        \centering
        \includegraphics[scale=0.8]{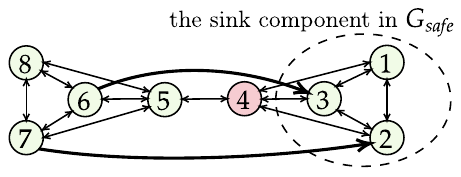}
        \caption{A knowledge connectivity graph that satisfies the requirements of the BFT-CUP model; hence, participants can solve consensus even in the presence of a Byzantine participant.}
        \label{fig:osrExamplePossible}
    \end{subfigure}
    \caption{Two knowledge connectivity graphs.
    In each graph, a vertex corresponds to a participant, and the outgoing edges from a vertex~$i$ point to the participants that $i$ initially knows; for instance, participant $1$ initially knows participants $2$, $3$, and~$4$.
    Participant~$4$ is Byzantine, and others are correct.}
    \label{fig:osrExamples}
\end{figure}

\vspace{0.2em}
\noindent\textbf{Problem statement.}
The BFT-CUP model solves consensus in partially synchronous systems, where each participant initially knows the system's fault threshold and a subset of participants.
Since knowing the system's fault threshold restricts the full potential of the BFT-CUP model in the hybrid setting, the main goal of this paper is to eliminate the explicit knowledge of that parameter by proposing a new model called the \textit{BFT Consensus with Unknown Participants and Fault Threshold} (BFT-CUPFT).
The BFT-CUPFT model is indeed an extension of BFT-CUP that enables participants to solve BFT consensus in partially synchronous systems where each participant initially knows only a subset of participants.

\vspace{0.2em}
\noindent\textbf{Contributions.}
To achieve such a goal, we begin by revisiting the BFT-CUP model under the extra assumption that each participant can use digital signatures 
(this assumption has been made in related work~\cite{garcia_2018,mazieres_2015,vassantlal_2023}).
We refer to the resulting model as the \textit{authenticated BFT-CUP} model.
Notably, Alchieri et al.~\cite{alchieri_2008} established that the requirements of knowledge connectivity graphs for solving consensus in the BFT-CUP model remain unchanged, irrespective of whether digital signatures are used.
Although such an assumption does not reduce the initial knowledge required by participants to solve consensus, it enables the design of a simpler consensus protocol.
We use the simpler protocol to highlight the significance of the fault threshold in the BFT-CUP model and clarify the necessary adjustments required when participants lack such information.

We then show an impossibility result stating that solving consensus is impossible in a partially synchronous system when the knowledge connectivity graph formed collectively by the initial knowledge of participants satisfies the requirements of the BFT-CUP model, yet no participant is provided with the fault threshold. 
This holds even when considering the aforementioned extra assumption.
As a result, we introduce the \textit{BFT Consensus with Unknown Participants and Fault Threshold} (BFT-CUPFT) model, defining a new type of knowledge connectivity graph that is sufficient to solve consensus without having information about the fault threshold.
To validate its sufficiency, we design a protocol for solving consensus in BFT-CUPFT.
Table~\ref{tbl} outlines the (im)possibility of solving BFT consensus deterministically under different system models.

In summary, the main contributions of this paper can be outlined as follows:
\begin{itemize}
    \item We revisit the BFT-CUP model by assuming each participant can use digital signatures.
    We design protocols that are simpler than the original ones under this new assumption.
    \item We prove an impossibility result for solving consensus when knowledge connectivity graphs satisfy the requirements of the BFT-CUP model, but participants are not provided with the fault threshold. 
    \item We introduce the BFT-CUPFT model by defining a new type of knowledge connectivity graph that allows participants to solve consensus when each participant initially knows only a subset of participants, not the system's membership or the fault threshold. 
    \item We design a consensus protocol that solves consensus in the BFT-CUPFT model.
\end{itemize}

\begin{table*}[t!]
\centering
\begin{tabular}{cl|cll|}
\cline{3-5}
\multicolumn{2}{c|}{\multirow{2}{*}{}} & \multicolumn{3}{c|}{\rule[-1ex]{0pt}{3.5ex}Knowledge about $n$ and $f$} \\ \cline{3-5} 
\multicolumn{2}{c|}{} & \multicolumn{1}{l|}{\begin{tabular}[c]{@{}l@{}}Known $n$, Known $f$\end{tabular}} & \multicolumn{1}{l|}{\rule[-1ex]{0pt}{3.5ex}\begin{tabular}[c]{@{}l@{}}Unknwon $n$, Known $f$\end{tabular}} & \begin{tabular}[c]{@{}l@{}}Unknown $n$, Unknown $f$\end{tabular} \\ \hline
\multicolumn{1}{|c|}{\multirow{3}{*}{\begin{tabular}[c]{@{}c@{}}Communication\\ Model\end{tabular}}} & Sync. & \multicolumn{1}{l|}{\rule[-1ex]{0pt}{3.5ex}\ding{51} (e.g.~\cite{abraham2017brief})} & \multicolumn{1}{l|}{\ding{51} (e.g.~\cite{khanchandani_2021})} & \multicolumn{1}{l|}{\ding{51} (e.g.~\cite{khanchandani_2021})} \\ \cline{2-5} 
\multicolumn{1}{|c|}{}  & \begin{tabular}[c]{@{}l@{}}\rule[-1ex]{0pt}{3.5ex}Partially Sync.\end{tabular} & \multicolumn{1}{l|}{\ding{51} (e.g.~\cite{castro_1999,hotstuff})} & \multicolumn{1}{l|}{\begin{tabular}[c]{@{}l@{}} \ding{51} (BFT-CUP~\cite{alchieri_2008,alchieri_2016})\end{tabular}} & \begin{tabular}[c]{@{}l@{}}{\color{teal} \ding{51} (BFT-CUPFT) $\leftarrow$ this work}\end{tabular} \\ \cline{2-5} 
\multicolumn{1}{|c|}{} & Async. & \multicolumn{1}{l|}{\rule[-1ex]{0pt}{3.5ex}\ding{55} (see~\cite{flp})} & \multicolumn{1}{l|}{\ding{55} (see~\cite{flp})} & \multicolumn{1}{l|}{\ding{55} (see~\cite{flp})} \\ \hline
\end{tabular}
\caption{The (im)possibility of solving Byzantine consensus deterministically under different system models.}
\label{tbl}
\end{table*}

\vspace{0.2em}
\noindent\textbf{Paper organization.}
The remainder of the paper is organized as follows. 
Section~\ref{sec:preliminaries} introduces our system model and provides the background for this paper.
Section~\ref{sec:revisiting:bft:cup} revisits the BFT-CUP model by assuming that each process can use digital signatures.
Section~\ref{sec:problem:specification} presents an impossibility result for solving consensus when knowledge connectivity graphs satisfy the requirements of the BFT-CUP model, but participants are not provided with the fault threshold.  
Section~\ref{sec:extended:bft:cup} introduces the BFT-CUPFT model, defining a new type of knowledge connectivity graph. 
Section~\ref{sec:consensus:extended:bft:cup} solves consensus in the BFT-CUPFT model.
Sections~\ref{sec:related:work} and~\ref{sec:conclusion} present the related work and conclude the paper, respectively.

\section{Preliminaries}\label{sec:preliminaries}
\subsection{System Model}
We consider a distributed system comprised of a finite set~$\Pi$ of processes operating under the assumption of partial synchrony~\cite{dwork_1988, makingByzConLive}, which guarantees that, for each execution of the protocol, there exists a time GST and a duration $\delta$ such that after GST, message delays between correct processes are bounded
by $\delta$. 
Before GST, messages may experience arbitrary delays.
We assume that each process~$i \in \Pi$ initially knows only a subset~$\Pi_{i} \subseteq \Pi$.

We denote the set of failed processes during an execution by $\Pi_F \subset \Pi$.
The faulty processes can behave arbitrarily,~i.e., can be Byzantine~\cite{lamport_1982}, and may collude and coordinate their actions.
A non-Byzantine process is said to be \textit{correct}.
We denote the set of correct processes during an execution by $\Pi_C = \Pi \setminus \Pi_F$.
We assume a static Byzantine adversary controlling the faulty processes, i.e., the set of processes controlled by the adversary is fixed at the beginning and does not change throughout the execution of the protocols.
We assume that correct processes neither know $\Pi$ nor $\Pi_F$; however, Byzantine processes may know both.

We further assume that each process has a unique ID, IDs are not necessarily consecutive, and it is infeasible for a faulty process to obtain additional IDs to launch a \emph{Sybil attack}~\cite{douceur_2002}.
Processes communicate by message passing through authenticated and reliable point-to-point channels, and when needed, can sign messages using digital signatures. 
We denote by $\langle m\rangle_i$ a message~$m$ signed by process~$i$.
A process $i$ can only send a message directly to another process $j$ if $i$ knows~$j$, i.e., if~$j \in \Pi_{i}$. 

\subsection{The Consensus Problem}\label{subsection:consensus} 
In the consensus problem, each process \emph{proposes} a value, and all correct processes must \emph{decide} the same value among the proposed values. 
Formally, any protocol that solves consensus must satisfy the following properties:
\begin{itemize}
    \item \emph{Validity:} if a correct process decides a value~$v$, then $v$ was proposed by some process.
    \item \emph{Agreement:} no two correct processes decide differently.
    \item \emph{Termination:} every correct process eventually decides some value.
    \item \emph{Integrity:} every correct process decides at most once.
\end{itemize}

\subsection{The BFT-CUP Model}
The \emph{Byzantine Fault-Tolerant Consensus with Unknown Participants} (BFT-CUP) model~\cite{alchieri_2016} solves consensus while tolerating Byzantine faults in partially synchronous systems.
The BFT-CUP model operates under the same system model as this paper, except it does not rely on digital signatures. 
It also assumes every process is provided with the fault threshold~$f$ as input (i.e.,~$|\Pi_F| \leq f$). 

In this model, each process initially obtains partial knowledge about the others using a local oracle called \emph{participant detector} (PD).
In further detail, let $\mathit{PD}_i$ represent the participant detector for process $i$, wherein $\mathit{PD}_i$ returns a set $\Pi_{i} \subseteq \Pi$ comprising the processes that $i$ can initially contact.
We assume that $\Pi_i$ can expand over time, but $\mathit{PD}_i$ always returns the same set.
Further, we assume that $\mathit{PD}_i$ might return~$\Pi$ if $i$ is Byzantine as by assumption, a Byzantine process may know~$\Pi$.

The information provided by the participant detectors of all processes collectively forms a directed graph known as \emph{knowledge connectivity graph}~\cite{cavin_2004}. 
Specifically, a directed graph $G_{\mathit{di}} = (V_{\mathit{di}}, E_{\mathit{di}})$ is a knowledge connectivity graph if~$V_{\mathit{di}}$ represents the set of processes $\Pi$, and an edge $(i, j) \in E_{\mathit{di}}$ exists if and only if process~$i$ initially knows process~$j$, i.e., $j \in \mathit{PD}_i$.
For instance, the graphs illustrated in Fig.~\ref{fig:osrExamples} exemplify knowledge connectivity graphs wherein $\mathit{PD}_1 = \{2,3,4\}$.
Since the PD of each process always returns the same set, the knowledge connectivity graph formed by PDs of processes is a \emph{static} graph. 

As stated before, the system's knowledge connectivity graph might differ from its communication network.
In further detail, in the BFT-CUP model, the communication network is assumed to be a complete graph as there is a reliable point-to-point channel between any two processes, according to the system model.
However, the knowledge connectivity graph is not necessarily a complete graph (e.g., see Fig.~\ref{fig:osrExamples}).
When $j \in \mathit{PD}_i$, process $i$ can send messages to $j$ from the beginning of the execution. 
However, when $j \notin \mathit{PD}_i$, process~$i$ first needs to discover $j$ through the processes it knows, and only after that it can directly communicate with~$j$.

The BFT-CUP model guarantees that processes can solve consensus when the knowledge connectivity graph meets specific requirements. 
Before delving into these requirements, we first review some relevant graph notations.
A directed graph $H_{\mathit{di}}=(V_{\mathit{di}},E_{\mathit{di}})$ is said to be \textit{$k$-strongly connected} if for any pair of nodes $i,j \in V_{\mathit{di}}$, $i$ can reach $j$ through at least~$k$ node-disjoint paths in $H_{\mathit{di}}$.
The \textit{strong connectivity} $\kappa(H_{\mathit{di}})$ of~$H_{\mathit{di}}$ is the maximum value of $k$ for which $H_{\mathit{di}}$ is $k$-strongly connected.
Given two non-empty sets of processes~$A$ and~$B$, we use the notation $A \xrightarrow[]{k} B$ to indicate that there are $k$ vertices in~$A$ that have outgoing neighbors in~$B$, where $k$ is a natural number.
Besides, $A \xrightarrow[]{>k} B$ denotes that the number of vertices in $A$ that have outgoing neighbors in $B$ is greater than $k$.

Given an directed graph $G_{\mathit{di}}=(V_{\mathit{di}},E_{\mathit{di}})$, we denote the subgraph induced by a set of nodes~$U_{\mathit{di}} \subseteq V_{\mathit{di}}$ as $G_{\mathit{di}}[U_{\mathit{di}}]$, i.e., $G_{\mathit{di}}[U_{\mathit{di}}] = \boldsymbol{(}U_{\mathit{di}}, \{ (i,j) \ | \ i, j \in U_{\mathit{di}} \land (i,j) \in E_{\mathit{di}}\}\boldsymbol{)}$. 
Further, we say a strongly connected component~$G_{\mathit{sink}}$ of $G_{\mathit{di}}$ is a \emph{sink} if and only if there is no path from a node in $G_{\mathit{sink}}$ to other nodes of $G_{\mathit{di}}$, except nodes in $G_{\mathit{sink}}$ itself.
A node $v\in V_{\mathit{di}}$ is a \emph{sink member} if it belongs to a sink component of $G_{\mathit{di}}$; otherwise, it is a \emph{non-sink member}.
An undirected counterpart can be defined for any directed graph~$G_{\mathit{di}}$ as $G = \boldsymbol{(}V_{\mathit{di}}, \{(i, j) \ | \ (i, j) \in E_{\mathit{di}} \lor (j, i) \in E_{\mathit{di}}\}\boldsymbol{)}$.
\begin{definition}[$k$-One Sink Reducibility ($k$-OSR) PD~\cite{alchieri_2016}]\label{def:osr}
A graph $G_{\mathit{di}}$ belongs to $k$-OSR PD if:
\begin{itemize}
    \item the undirected graph $G$ obtained from $G_{\mathit{di}}$ is connected,
    \item the directed acyclic graph obtained by reducing $G_{\mathit{di}}$ to its strongly connected components has exactly one sink, namely $G_\mathit{sink}=(V_\mathit{sink}, E_\mathit{sink})$,
    \item the sink component $G_\mathit{sink}$ is $k$-strongly connected, and
    \item there are at least $k$ node-disjoint paths from any process $i \notin V_\mathit{sink}$ to any process $j \in V_\mathit{sink}$.
\end{itemize}    
\end{definition} 
Given a knowledge connectivity graph $G_{\mathit{di}}=(V_{\mathit{di}},E_{\mathit{di}})$, we refer to $G_{\mathit{di}}[\Pi_C]$ by the \textit{safe} subgraph $G_{\mathit{safe}}$ of $G_{\mathit{di}}$.
With these definitions, we are ready to present the requirements that a knowledge connectivity graph must satisfy to enable solving BFT-CUP.

\begin{theorem}[\hspace{-0.01em}\cite{alchieri_2016}]\label{thm:bft:cup}
The safe subgraph $G_{\mathit{safe}}$ of a knowledge connectivity graph must satisfy the following two properties to ensure correct processes can solve BFT-CUP:
    \begin{itemize}
        \item $G_{\mathit{safe}}$ belongs to the $(f+1)$-OSR PD, and
        \item the sink component of $G_{\mathit{safe}}$ must contain at least $2f+1$ processes.
    \end{itemize}
\end{theorem}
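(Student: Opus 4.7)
The plan is to treat this characterization as a biconditional and therefore prove it in two directions: (i) \emph{necessity}, namely that if either property fails, some adversarial schedule prevents consensus; and (ii) \emph{sufficiency}, namely that a protocol can be constructed when both properties hold. Since the statement is attributed to prior work, I would follow the usual style of CUP-style proofs, adapting classical BFT partitioning arguments to the knowledge-connectivity setting.

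For necessity, I would use indistinguishability arguments. To rule out sinks with fewer than $2f+1$ correct processes, I would assume $|V_{\mathit{sink}}| \leq 2f$ and mimic the partitioning argument behind the classical $3f+1$ lower bound: split the sink into two halves and let $f$ Byzantine processes behave differently toward each half, producing two executions with different valid proposals that correct sink processes cannot distinguish, forcing an agreement violation. To rule out $G_{\mathit{safe}}$ that fails to be $(f+1)$-OSR PD, I would exhibit a vertex cut of size at most $f$ inside the sink (or between a non-sink source and the sink) and have the adversary place Byzantine processes exactly at that cut; by Menger's theorem, every surviving communication path between the separated correct sides passes through an adversarial node, so the adversary can present different forwarded states on each side, yielding indistinguishable executions that would force inconsistent decisions.

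For sufficiency, I would describe a two-phase protocol built on top of partial synchrony. In a \emph{discovery} phase, each correct process floods \emph{knowledge-acquisition} messages through its outgoing edges; after GST, the $(f+1)$ node-disjoint paths guarantee that any non-sink correct process eventually receives consistent evidence about every sink member along at least one entirely-correct path, and the $(f+1)$-strong connectivity inside the sink lets its correct members discover one another. A process that sees the same claim echoed through more than $f$ disjoint paths can safely accept it, since not all $f+1$ witnesses can be Byzantine. In an \emph{agreement} phase, the sink members—who now mutually know each other and number at least $2f+1$ correct—run a standard partially-synchronous BFT consensus routine; non-sink processes then adopt the sink's decision, again using the $(f+1)$ disjoint paths to filter out Byzantine-forged decisions. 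Validity follows from the underlying BFT routine and the valid-value propagation; agreement follows because the unique sink acts as the single decision source; termination follows from partial synchrony and eventual flooding.

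The main obstacle will be the discovery phase, specifically formalizing what it means to reliably ``learn'' membership and proposals along multiple paths when intermediate Byzantine relays may fabricate, suppress, or reorder messages. This requires a Menger-style argument showing that disjoint-path multiplicity translates into at least one fully-correct conduit, together with an acceptance rule tying the $f+1$ witness threshold to the OSR-PD parameter. A secondary subtlety is handling the non-sink members: because they never enter the sink's agreement instance, one must argue that adopting the sink's decision preserves both agreement (the sink produces a single value) and validity (proposals from non-sinks were able to flow into the sink through the $f+1$ paths before the decision was taken).
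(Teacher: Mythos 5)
This theorem is not proved in the paper at all: it is imported verbatim from the cited BFT-CUP work of Alchieri et al., and none of the appendices contain an argument for it, so there is no ``paper proof'' to compare against. Judged on its own, your sketch is directionally consistent with how the result is established in that prior work and with the structure this paper itself mirrors in Algorithms~1--3 (discovery, sink identification, then a standard BFT agreement inside the sink with non-sink members adopting the decision): indistinguishability/partition arguments for necessity, and a discovery-plus-agreement protocol for sufficiency, with the ``more than $f$ node-disjoint paths'' acceptance rule playing exactly the role of the reachable reliable broadcast primitive the paper describes. Two cautions, though. First, the $2f+1$ bound is on the sink of $G_{\mathit{safe}}$, i.e., on \emph{correct} sink processes; your classical partition argument must therefore be set up with the $f$ Byzantine processes positioned as sink-adjacent equivocators so that the effective sink has at most $3f$ members, rather than splitting a set of $2f$ correct processes as if the Byzantine ones were among them. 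Second, what you present is a proof plan rather than a proof: the step you yourself flag as the main obstacle --- formalizing that disjoint-path multiplicity yields a fully correct conduit and that the $f+1$ witness threshold is sound and live through Byzantine relays --- is precisely the technical core of the sufficiency direction (it is why the original protocol runs to roughly 120 lines), and leaving it unresolved means the sufficiency half is not yet established.
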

If a knowledge connectivity graph meets the properties of Theorem~\ref{thm:bft:cup}, we say it satisfies the requirements of the BFT-CUP model.
We also denote the family of all finite graphs that meet these properties by $\mathcal{G}_{\mathit{di}}$.
Note that a graph that belongs to~$\mathcal{G}_{\mathit{di}}$ can have up to $f$ Byzantine nodes, but the properties specified in Theorem~\ref{thm:bft:cup} only describe the characteristics of the subgraph induced by correct nodes.
Fig.~\ref{fig:osrExamplePossible} is an example of a knowledge connectivity graph that satisfies the requirements of the BFT-CUP model in which participants in $\{1, 2, 3\}$ are the sink members of $G_{\mathit{safe}}$. 
Note that participant~$4$ is Byzantine and therefore not considered in $G_{\mathit{safe}}$.

\vspace{0.2em}
\noindent\textbf{Solving consensus in the BFT-CUP model.}
The sink component holds a pivotal role in the BFT-CUP model. 
In order to solve consensus within this model, processes expand their knowledge~-- i.e., the set of processes they initially know~-- and actively seek to identify the sink component.
Processes communicate using a communication primitive called reachable reliable broadcast~\cite{alchieri_2016} that allows delivery of a message if it is received through more than~$f$ node-disjoint paths from the sender.

Once the sink component is discovered, processes can establish intersecting quorums and execute a consensus protocol, such as PBFT~\cite{castro_1999}, on top of those quorums.
The definition of quorums highlights the significance of the sink: as demonstrated in~\cite{vassantlal_2023}, any defined quorum must include at least $\ceil{(|V_{\mathit{sink}}|+f+1)/2}$ sink processes to intersect with any other quorum in at least one correct process.
\begin{remark}
    The BFT-CUP model does not rely on digital signatures.
\end{remark}

\section{Revisiting the BFT-CUP Model}\label{sec:revisiting:bft:cup}
This section revisits the BFT-CUP model, incorporating an additional assumption that allows each process to use digital signatures~-- denoting the resulting model as the \textit{authenticated BFT-CUP} model.
Recall that in the BFT-CUP model, a process delivers a message if it is received through more than $f$ node-disjoint paths.
However, using digital signatures, processes can deliver messages without waiting to receive them through multiple paths.
This significantly simplifies the communication protocols and reduces the places where the fault threshold is required.
As presented next, the consensus protocol in the authenticated BFT-CUP model has roughly 20 lines compared with 120 in \cite{alchieri_2016}.
Although the requirements of the knowledge connectivity graphs to solve consensus remain unchanged~\cite{alchieri_2008}, in the simplified protocol, we can highlight where the fault threshold is still required and clarify the necessary adjustments to remove it entirely.

\subsection{Consensus Protocol in the Authenticated BFT-CUP Model}
In order to solve consensus in the authenticated BFT-CUP model, we introduce three algorithms~-- namely Discovery, Sink, and Consensus~-- along with their associated properties. 
The relationship between these algorithms is as follows: the Consensus algorithm executes the Sink algorithm, and the Sink algorithm executes the Discovery algorithm.

\vspace{0.2em}
\noindent\textbf{Discovery algorithm.}
The Discovery algorithm, described in Algorithm~\ref{alg:discovery:known:f}, enables any correct process to expand the set of processes that it knows with the aim of eventually receiving the PD of each correct process reachable from it. 
This algorithm provides only one task, $\mathtt{discovery}$.
When a process executes this task, it periodically asks the processes it knows to respond by sending the PDs they have received. 
\begin{algorithm}[!t]
\caption{The Discovery algorithm -- process $i$.}
\label{alg:discovery:known:f}
\begin{algorithmic}[1]
\NoThen\NoDo
    \STATEx{\hspace{-1.67em}\textbf{task} $\mathtt{discovery}()$}
        \STATE{$\mathcal{S}_\mathit{PD}^i,\mathcal{S}_\mathit{known}^i,\mathcal{S}_\mathit{received}^i \leftarrow \{ \langle i, \mathit{PD}_i \rangle_i \}, \mathit{PD}_i \cup \{i\}, \{i\}$}\label{line:sign}
        \STATE{\textbf{periodically} $\forall j \in \mathcal{S}_\mathit{known}^i:$ \textbf{send} $\langle \textsc{GetPDs} \rangle$ \textbf{to} $j$}\label{line:get:pds}

    \vspace{0.2em}
    \STATEx{\hspace{-1.67em}\textbf{upon receiving} $\langle \textsc{GetPDs} \rangle$ \textbf{from} $j$}
    \STATE{\textbf{send} $\langle \textsc{SetPDs}, \mathcal{S}_\mathit{PD}^i \rangle$ \textbf{to} $j$}\label{line:send:PDs}

    \vspace{0.2em}
    \STATEx{\hspace{-1.67em}\textbf{upon receiving} $\langle \textsc{SetPDs}, \mathcal{S}_\mathit{PD}^j \rangle$ \textbf{from} $j$}  
        \STATE{$\mathcal{S}_\mathit{PD}^i \leftarrow \mathcal{S}_\mathit{PD}^i \cup \mathcal{S}_\mathit{PD}^j$}\label{line:start:update}
        \STATE{$\mathcal{S}_\mathit{known}^i \leftarrow \mathcal{S}_\mathit{known}^i \cup \{ k \in \mathit{PD}_* \ | \ \langle *, \mathit{PD}_*\rangle_* \in \mathcal{S}_\mathit{PD}^j \}$}
        \STATE{$\mathcal{S}_\mathit{received}^i \leftarrow \mathcal{S}_\mathit{received}^i \cup \{ k \ | \ \langle k, *\rangle_* \in \mathcal{S}_\mathit{PD}^j \}$}\label{line:end:update}
    \end{algorithmic}
\end{algorithm}
Each process $i$ has the following three local sets:
\begin{itemize}
    \item $\mathcal{S}_\mathit{PD}^i$ -- Process $i$ stores any received PD in this set, initialized with $\{\langle i, \mathit{PD}_i \rangle_i\}$.
    \item $\mathcal{S}_\mathit{known}^i$ -- This set contains the processes that $i$ knows, initialized with $\mathit{PD}_i \cup \{i\}$.
    \item $\mathcal{S}_\mathit{received}^i$ -- This set contains the set of processes that $i$ has received their PDs, initialized with $\{i\}$.
\end{itemize}

Periodically, each process~$i$ sends a $\langle \textsc{GetPDs} \rangle$ message to all processes that it knows, requesting each of them to share its $\mathcal{S}_\mathit{PD}^*$ (line~\ref{line:get:pds}).
Upon receiving a $\langle \textsc{GetPDs} \rangle$ request from process~$j$, process~$i$ responds by sending the PDs it knows, i.e., sending set $\mathcal{S}_\mathit{PD}^i$ to $j$ (line~\ref{line:send:PDs}). 
When $i$ receives a message $\langle \textsc{SetPDs}, \mathcal{S}_\mathit{PD}^j \rangle$ from process $j$, it updates its local sets using $\mathcal{S}_\mathit{PD}^j$ (lines~\ref{line:start:update}-\ref{line:end:update}).
It is worth noting that, since correct processes sign their PDs (line~\ref{line:sign}), Byzantine processes cannot lie about the PD of any correct process~$i$, either by modifying $\mathit{PD}_i$ or by creating a PD for~$i$.

This algorithm satisfies two properties that are specified in the following theorem.
\begin{theorem}\label{thm:discovery:sink}
    Consider a system with a knowledge connectivity graph $G_{\mathit{di}} \in \mathcal{G}_\mathit{di}$. 
    Assuming $V_{\mathit{sink}}$ comprises the sink members of $G_{\mathit{di}}$, by executing Algorithm~\ref{alg:discovery:known:f} in this system, every correct process eventually
    \begin{enumerate*}[label=(\alph*)]
        \item discovers all correct sink members, i.e., 
        $\forall i \in \Pi_C: V_{\mathit{sink}} \cap \Pi_C \subseteq \mathcal{S}_\mathit{known}^i$, and
        \item receives the PDs of all correct sink members, i.e.,
        $\forall i \in \Pi_C: V_{\mathit{sink}} \cap \Pi_C \subseteq \mathcal{S}_\mathit{received}^i$.
    \end{enumerate*}
\end{theorem}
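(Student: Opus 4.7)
The plan is to fix a correct process $i$ and an arbitrary correct sink member $k \in V_{\mathit{sink}} \cap C$, and then trace how the signed record $\langle k, \mathit{PD}_k\rangle_k$ propagates backwards to $i$ along a chain of correct processes. Since $G_{\mathit{di}} \in \mathcal{G}_{\mathit{di}}$, Theorem~\ref{thm:bft:cup} implies that $G_{\mathit{safe}}$ belongs to $(f+1)$-OSR PD, so there exists at least one directed path $i = p_0 \to p_1 \to \cdots \to p_\ell = k$ lying entirely in $G_{\mathit{safe}}$; in particular, every $p_m$ is correct and $p_{m+1} \in \mathit{PD}_{p_m}$ for each $m$.

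The key step is a backward induction on $m$ showing that every correct $p_{m-1}$ eventually stores $\langle k, \mathit{PD}_k\rangle_k$ in $\mathcal{S}_{\mathit{PD}}$. For the base case, $p_{\ell-1}$ initializes $\mathcal{S}_{\mathit{known}}$ with $\mathit{PD}_{p_{\ell-1}} \cup \{p_{\ell-1}\}$, which already contains $k$, and therefore periodically issues $\langle \textsc{GetPDs} \rangle$ to $k$ on line~\ref{line:get:pds}; after GST such a request is delivered within $\delta$, $k$ answers with $\mathcal{S}_{\mathit{PD}}^k \ni \langle k, \mathit{PD}_k\rangle_k$, so the guard requiring the sender's own signed PD is satisfied and $p_{\ell-1}$ absorbs the record through the updates on lines~\ref{line:start:update}--\ref{line:end:update}. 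The inductive step is analogous: once $p_m$'s $\mathcal{S}_{\mathit{PD}}$ permanently contains $\langle k, \mathit{PD}_k\rangle_k$, its neighbour $p_{m-1}$, which already knows $p_m$ by the edge $p_{m-1} \to p_m$, periodically queries $p_m$; since $p_m$ is correct, its reply always carries $\langle p_m, \mathit{PD}_{p_m}\rangle_{p_m}$ (so the guard passes) together with $\langle k, \mathit{PD}_k\rangle_k$, which $p_{m-1}$ then adopts.

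Unrolling the induction down to $m=0$ yields $\langle k, \mathit{PD}_k\rangle_k \in \mathcal{S}_{\mathit{PD}}^i$, which by the $\mathcal{S}_{\mathit{received}}$ update rule forces $k \in \mathcal{S}_{\mathit{received}}^i$ and establishes part~(b). For part~(a) I would apply the same argument to $p_{\ell-1}$ in place of $k$, eventually placing $\langle p_{\ell-1}, \mathit{PD}_{p_{\ell-1}}\rangle_{p_{\ell-1}}$ into $\mathcal{S}_{\mathit{PD}}^i$; since $k \in \mathit{PD}_{p_{\ell-1}}$, the $\mathcal{S}_{\mathit{known}}$ update rule then inserts $k$ into $\mathcal{S}_{\mathit{known}}^i$.

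The hardest part will be cleanly ruling out interference from Byzantine processes, which may drop messages or inject arbitrary signed PDs but crucially cannot forge the signature of a correct process. This is neutralized on two fronts: the $(f+1)$-OSR property guarantees a propagation chain entirely inside $G_{\mathit{safe}}$, so no Byzantine intermediary sits on the chosen path, and the unforgeability of signatures makes the acceptance guard meaningful without requiring any knowledge of $f$. Finiteness of the ``eventually'' claim then follows from partial synchrony combined with the periodic retransmission of $\langle \textsc{GetPDs} \rangle$, each inductive step completing within a bounded number of query periods after GST.
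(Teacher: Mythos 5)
Your proof is correct and follows essentially the same route as the paper's: both rest on the existence of an all-correct path from any correct process to each correct sink member (guaranteed by the $(f+1)$-OSR property of $G_{\mathit{safe}}$) together with the periodic \textsc{GetPDs}/\textsc{SetPDs} exchange advancing the signed PDs one hop per round after GST. The paper organizes this as a forward, distance-layered wave with explicit time bounds and a sink/non-sink case split (and additionally argues that even Byzantine sink members end up in $\mathcal{S}_{\mathit{known}}$), whereas you run a backward induction along a single fixed path, but the underlying argument is the same.
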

\ifbool{extendedVersion}{
The proof of the aforementioned theorem can be found in Appendix~\ref{appendix:discovery}.
}{
The proof of the aforementioned theorem, along with proofs of other theorems not presented here, can be found in the extended version of this paper~\cite{extended}.
}

\vspace{0.2em}
\noindent\textbf{Sink algorithm.}
The Sink algorithm enables each process to discover a sink component.
Before introducing this algorithm, we present two theorems for a knowledge connectivity graph belonging to $\mathcal{G}_{\mathit{di}}$.
The first theorem ensures the existence of two sets with specific properties, while the second theorem demonstrates that the union of these sets contains all and only sink members.

\begin{theorem}\label{thm:sink:alternative:def}
    In a knowledge connectivity graph $G_{\mathit{di}}=(V_{\mathit{di}},E_{\mathit{di}})$ belonging to $\mathcal{G}_{\mathit{di}}$, there are two sets $S_1,S_2 \subseteq V_{\mathit{di}}$ that satisfy the following properties for a given fault threshold~$f$:
    \begin{enumerate}[label=P\arabic*)]
        \item The size of $S_1$ is greater than or equal to $2f+1$, i.e., $|S_1|\geq 2f+1$.
        \item The strong connectivity of the subgraph induced by $S_1$ is greater than or equal to $f+1$, i.e., $\kappa(G_{\mathit{di}}[S_1]) \geq f+1$.
        %
        \item The number of processes in $S_1$ that have outgoing edges to ~$V_{\mathit{di}}\setminus S_1$ is at most $f$, i.e., $S_1 \xrightarrow[]{\leq f} V_{\mathit{di}} \setminus S_1$.
        \item $S_2$ contains any process~$i \notin S_1$ such that the number of processes in $S_1$ that have outgoing edges to~$i$ is greater than~$f$, i.e., $\forall i \in V_{\mathit{di}}\setminus S_1 : S_1 \xrightarrow[]{> f}~\{i\} \iff i \in S_2$.
    \end{enumerate}
\end{theorem}

    

\ifbool{extendedVersion}{
The proof of the aforementioned theorem and other theorems related to the Sink algorithm can be found in Appendix~\ref{appendix:sink}.
}{}

\begin{theorem}\label{thm:sink:exists}
    In a knowledge connectivity graph $G_{\mathit{di}}=(V_{\mathit{di}},E_{\mathit{di}})$ belonging to $\mathcal{G}_{\mathit{di}}$, if a fault threshold~$f$ and two subsets of processes~$S_1$ and~$S_2$ satisfy the properties specified in Theorem~\ref{thm:sink:alternative:def}, then set~$S_1 \cup S_2$ contains all and only the sink members.
\end{theorem}

In Theorems~\ref{thm:sink:alternative:def} and~\ref{thm:sink:exists}, set~$S_1$ (resp.~$S_2$) comprises the subset of sink members whose strong connectivity can (resp. cannot) be calculated.
To clarify why the strong connectivity of~$S_2$ cannot be calculated, we determine sets~$S_1$ and~$S_2$ in two scenarios.
In these scenarios, we assume that each sink member executes the Discovery algorithm.
For simplicity, we assume there are~$f$ Byzantine processes in the sink.
This happens when each correct sink member has at least~$f+1$ node-disjoint paths to each Byzantine sink member, i.e., for each Byzantine sink member~$i$, there are at least~$f+1$ correct sink members that know~$i$.
\begin{itemize}
    \item \textbf{Scenario I)} Byzantine sink members remain silent during the execution, and every correct sink member receives the PDs of other correct sink members by time~$t \geq 0$.
    At time~$t$,~$S_1$ contains all correct sink members while~$S_2$ contains all Byzantine sink members.
    Note that each correct sink member can calculate the strong connectivity of~$S_1$ at time~$t$, while it cannot calculate the strong connectivity of~$S_2$ or any other set containing at least a Byzantine sink member.
    This limitation arises because for computing the strong connectivity of a set, it is required to have the PDs of all processes in that set. 
    \item \textbf{Scenario II)} By time~$t$, each sink member receives the PDs of other sink members, except for $f$ correct sink members~$D$. 
    Since a correct process that is not a member of~$D$ cannot distinguish between this and the previous scenario, it determines sets~$S_1$ and~$S_2$ at time~$t$, considering the possibility that members of~$D$ might be Byzantine and stay silent.
    Accordingly,~$S_1$ contains those sink members that are not in set~$D$ while $S_2=D$.
    Like the previous case, as members of $S_1$ have not received the PDs of members of $S_2$, they cannot calculate the strong connectivity of $S_2$ or any set that contains at least a member of~$D$.
\end{itemize}

The following steps provide an insight into why the properties presented in Theorem~\ref{thm:sink:exists} identify a sink:
\begin{itemize}
    \item Since $|\Pi_F| \leq f$, Property P1 implies that $S_1$ contains at least $f+1$ correct processes.
    \item Since $G_\mathit{di} \in \mathcal{G}_{\mathit{di}}$, there is no link from a correct sink member to a correct non-sink member.
    Hence, $S_1$ cannot contain both correct sink and correct non-sink members due to Property~P2.
    Accordingly, all correct members of~$S_1$ are either sink members or non-sink members.
    \item For the sake of contradiction, assume that all correct members of~$S_1$ are non-sink members.
    Since $G_\mathit{di} \in \mathcal{G}_{\mathit{di}}$, any correct non-sink member has at least $f+1$ node-disjoint paths to correct sink members.
    Therefore, all correct members of $S_1$ must have at least~$f+1$ node-disjoint paths to the sink members.
    This implies that there are at least~$f+1$ outgoing edges from~$S_1$ to the remaining processes. 
    Consequently,~$S_1$ cannot satisfy Property~P3, which means all correct members of~$S_1$ are sink members.
    \item For the sake of contradiction, assume that $S_1 \cup S_2$ is a proper subset of a sink, i.e., there is at least a process~$i \in V_{\mathit{sink}}$ such that $i \notin S_1\cup S_2$.
    Since $G_\mathit{di} \in \mathcal{G}_{\mathit{di}}$, there are at least $f+1$ node-disjoint paths from members of $S_1$ to~$i$.
    Accordingly,~$i$ must be a member of~$S_2$, which is a contradiction.
    Hence, $S_1 \cup S_2$ contains all members of a sink.
\end{itemize}

We define a predicate $\mathtt{isSink}_{G_{\mathit{di}}}$ to check whether a given fault threshold~$f$ and two sets of processes~$S_1$ and~$S_2$ satisfy the properties specified in Theorem~\ref{thm:sink:alternative:def}.
Specifically, $\mathtt{isSink}_{G_{\mathit{di}}}(f,S_1,S_2)$ is \textit{true} if and only if these parameters meet the properties of such a theorem.

Taking into account the existence of Byzantine processes in the sink, the Sink algorithm (Algorithm~\ref{alg:sink:known:f}) enables each process~$i$ to discover the sink members when the knowledge connectivity graph~$G_{\mathit{di}} \in \mathcal{G}_\mathit{di}$.
This algorithm provides only one function, $\mathtt{sink}$.
Using this function, each correct process continuously expands the set of processes it knows by executing the Discovery algorithm until it identifies the sink component.
It terminates the Sink algorithm by returning a set equal to $V_{\mathit{sink}}$ when there exists two sets $S_1$ and $S_2$ with conditions specified in line~\ref{line:sink:condition}.
These conditions are analogous to the properties specified in Theorem~\ref{thm:sink:alternative:def}.

\begin{algorithm}[t!]
\caption{The Sink algorithm -- process $i$.}
\label{alg:sink:known:f}
\begin{algorithmic}[1]
\NoThen\NoDo

    \STATEx{\hspace{-1.67em}\textbf{function} $\mathtt{isSink}_{G_{\mathit{di}}}(f, S_1, S_2)$}
    \STATE{\textbf{return} $|S_1|\geq 2f+1 \land \kappa(S_1) \geq f+1 \land (S_1 \xrightarrow[]{\leq f} \mathcal{S}_\mathit{known}^i \setminus S_1) \land S_2 = \{ j \ | \ \forall j \in \mathcal{S}_\mathit{known}^i \setminus S_1 : S_1 \xrightarrow[]{> f} \{j\} \}$}   
    
    \vspace{0.2em}
    \STATEx{\hspace{-1.67em}\textbf{function} $\mathtt{sink}()$}
    \STATE{\textbf{fork} $\mathtt{discovery}()$}

    \STATE{\textbf{wait until} $\exists S_1\subseteq \mathcal{S}_\mathit{received}^i, \exists S_2\subseteq \mathcal{S}_\mathit{known}^i \setminus S_1 :$}\label{line:sink:condition}
    \STATEx{$\quad \mathtt{isSink}_{G_{\mathit{di}}}(k-1, S_1, S_2) = \mathit{true}$}
    \STATE{\textbf{return} $S_1 \cup S_2$}
\end{algorithmic}
\end{algorithm}

Using an example, we demonstrate why the Sink algorithm returns $S_1 \cup S_2$ as the sink.
Consider the knowledge connectivity graph depicted in Fig.~\ref{fig:osrExamplePossible}.
Suppose process $1$ executes the $\mathtt{sink}$ function, and consider the following scenario.
Process~$2$ is slow, and process $4$ sends a set $P=\{1,2,3\}$ as its PD during the execution of $\mathtt{discovery}$.
When process $1$ receives~$P$ and~$\mathit{PD}_3$, the conditions specified in line~\ref{line:sink:condition} of Algorithm~\ref{alg:sink:known:f} are satisfied with $S_1=\{1,3,4\}$.
Note that there are more than $f$ processes in $S_1$ that have outgoing edges to process~$2$ so $S_2=\{2\}$.
Consequently, set $S_1\cup S_2=\{1,2,3,4\}$ is returned as the sink.
It is worth noting that in this algorithm, the connectivity of $S_1$ can be computed as $S_1 \subseteq \mathcal{S}_\mathit{received}^*$, while the connectivity of $S_2$ cannot be computed since the PDs of processes in $S_2$ were not received.

The Sink algorithm satisfies the properties specified in the following theorem.
\begin{theorem}\label{thm:sink:termination}
    In a knowledge connectivity graph $G_{\mathit{di}} \in \mathcal{G}_{\mathit{di}}$, Algorithm~\ref{alg:sink:known:f} executed by any correct process 
    \begin{enumerate*}[label=(\alph*)]
        \item eventually terminates and 
        \item returns all sink members.
    \end{enumerate*}
\end{theorem}

\vspace{0.2em}
\noindent\textbf{Solving consensus in the authenticated BFT-CUP model.}
Algorithm~\ref{alg:consensus} solves consensus in the authenticated BFT-CUP model.
This algorithm provides a $\mathtt{propose}$ function through which processes can propose a value and decide on a common value.
In this algorithm, each process $i$ first executes the Sink algorithm.
Upon the termination of the Sink algorithm, process~$i$ acts based on whether it is a sink or a non-sink member.
If it is a sink member, it executes a traditional consensus protocol (e.g., PBFT~\cite{castro_1999}) with the sink members.
Otherwise, it asks the sink members to respond by sending the decided value and waits until receiving the same value~$v$ from $\ceil{(|S|+1)/2}$ distinct sink members, where $S$ comprises the sink members.
Since the sink component contains at least $2f+1$ correct and at most $f$ Byzantine processes, $\ceil{(|S|+1)/2}\geq f+1$.
This implies that there is at least one correct process among the answering processes.
Thus, a correct process can be sure that~$v$ is the value decided by correct sink members and, therefore, can decide~$v$.

\begin{algorithm}[t!]
\caption{The Consensus algorithm -- process $i$.}
\label{alg:consensus}
\begin{algorithmic}[1]
\NoThen\NoDo
    \STATEx{\hspace{-1.67em}\textbf{variables}}
    \STATE{$\mathit{val} \leftarrow \perp$}

    \vspace{0.2em}
    \STATEx{\hspace{-1.67em}\textbf{function} $\mathtt{propose}(v)$}
    \STATE{$S \leftarrow \mathtt{sink}()$}
    \IF{$i \in S$}
        \STATE{$\mathit{val} \leftarrow \mathtt{consensus}.\mathtt{propose}(S,v)$}
    \ELSE
        \STATE{$\forall j \in S: $ \textbf{send} $\langle \textsc{GetDecidedVal} \rangle$ \textbf{to} $j$}
        \STATE{\textbf{wait until receiving the same} $\langle \textsc{DecidedVal}, \mathit{val} \rangle$}
        \STATEx{\qquad \textbf{from} $\ceil{(|S|+1)/2}$ distinct processes in $S$}
    \ENDIF
    \STATE{\textbf{return} $\mathit{val}$}

    \vspace{0.2em}
    \STATEx{\hspace{-1.67em}\textbf{upon receiving} $\langle \textsc{GetDecidedVal} \rangle$ \textbf{from} $j$}
    \STATE{\textbf{wait until} $\mathit{val} \neq \perp$}
    \STATE{\textbf{send} $\langle \textsc{DecidedVal}, \mathit{val} \rangle$ \textbf{to} $j$}
\end{algorithmic}
\end{algorithm}

\begin{theorem}\label{thm:consensus}
    Algorithm~\ref{alg:consensus} solves consensus in the authenticated BFT-CUP model.
\end{theorem}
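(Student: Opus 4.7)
The plan is to verify the three properties of consensus~--- Validity, Agreement, and Termination~--- by exploiting the guarantees already established for the Sink algorithm (Theorems~\ref{thm:sink:termination} and~\ref{thm:sink:sink}) and reducing the consensus problem on the sink to a classical partially synchronous Byzantine consensus protocol such as PBFT. The key observation that ties everything together is that by the assumptions of the BFT-CUP model, the returned set $S$ contains all sink members of $G_\mathit{di}$, has at least $2f+1$ correct processes and at most $f$ Byzantine processes, so $|S| \geq 2f+1$ and the ratio of Byzantine sink members satisfies $f/|S| < 1/3$; this makes $S$ a legitimate ``group'' on which PBFT is correct.

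For Termination, I would first invoke Theorem~\ref{thm:sink:termination} to argue that every correct process eventually returns from $\mathtt{sink}()$ with the same set $S$, by Theorem~\ref{thm:sink:sink}. Correct sink members then invoke the underlying PBFT instance on $S$, which terminates in partial synchrony given the bound on Byzantine processes in $S$ above; upon deciding, each correct sink member sets $\mathit{val}$ and will reply to any $\langle \textsc{GetDecidedVal}\rangle$ request. A correct non-sink process $i$ sends such a request to every $j \in S$ (which it can do because, by the Sink algorithm's behaviour, $i$ has already discovered~$S$ during Discovery and hence $S \subseteq \mathcal{S}_\mathit{known}$). Since at least $2f+1$ correct sink members respond with the same decided value, and $2f+1 \geq \ceil{(|S|+1)/2}$ whenever $|S| \leq 3f+1$ (which holds because $|S|$ consists of the at most $2f+1 + f = 3f+1$ sink members), process $i$ eventually collects enough matching responses.

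For Agreement, the PBFT instance guarantees that all correct sink members decide the same value $v$. A correct non-sink process~$i$ decides only after receiving $\ceil{(|S|+1)/2}$ identical $\langle \textsc{DecidedVal}, v'\rangle$ responses from distinct members of~$S$. Since at most~$f$ of those responders can be Byzantine and $\ceil{(|S|+1)/2} \geq f+1$ (using $|S|\geq 2f+1$), at least one correct sink member is among the responders; that correct process reports exactly the PBFT decision $v$, forcing $v' = v$. Validity then follows directly: the PBFT instance ensures that $v$ satisfies the $\mathtt{valid}$ predicate and was proposed by some sink member, and the non-sink members adopt this same $v$.

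The main obstacle, and the subtle part of the argument, is justifying that the quorum threshold $\ceil{(|S|+1)/2}$ simultaneously guarantees liveness (enough correct responders exist) and safety (one cannot be tricked by $f$ Byzantine replies into deciding a spurious value), while being agnostic about the exact value of $|S|$ when the number of Byzantine sink members is unknown to non-sink processes. Resolving this requires carefully using the bounds $|S| \geq 2f+1$ correct and at most $f$ Byzantine in both directions: on the one hand $\ceil{(|S|+1)/2} \geq f+1$ rules out an all-Byzantine matching quorum, and on the other hand there are at least $\ceil{(|S|+1)/2}$ correct sink members to make the wait condition live, a property inherited from the sink size guarantee of Theorem~\ref{thm:bft:cup}.
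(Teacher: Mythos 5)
Your proposal is correct and follows essentially the same route as the paper's proof: invoke Theorems~\ref{thm:sink:termination} and~\ref{thm:sink:sink} so that all correct processes agree on the sink set $S$, run PBFT among the sink members (at least $2f+1$ correct, at most $f$ Byzantine), and let non-sink members decide on $\ceil{(|S|+1)/2}$ matching replies, which must include a correct sink member since $\ceil{(|S|+1)/2}\geq f+1$. One small inaccuracy: your liveness justification assumes $|S|\leq 3f+1$, but the sink may contain more than $2f+1$ correct members; the clean argument is that the number of correct sink members is at least $|S|-f$, which is $\geq\ceil{(|S|+1)/2}$ exactly because $|S|\geq 2f+1$ --- this does not affect the validity of your overall argument.
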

\ifbool{extendedVersion}{
The proof of the aforementioned theorem can be found in Appendix~\ref{appendix:consensus}.
}{}

\subsection{The Fault Threshold's Role in the BFT-CUP Model}
In the three algorithms presented to solve consensus in the authenticated BFT-CUP model, the only place where the fault threshold is used is in the Sink algorithm (Algorithm~\ref{alg:sink:known:f}).
In that algorithm, each process requires knowing~$f$ in order to identify the sink and terminate the algorithm.
Consequently, the lack of access to $f$ may lead to the following issues:
\begin{enumerate*}[label=(\alph*)]
    \item the Sink algorithm does not terminate,
    \item multiple subsets of the sink might declare themselves as the sink, or
    \item a subset of non-sink members might declare themselves as the sink.
\end{enumerate*}
As processes must discover the sink to solve consensus, the Termination property of consensus is violated if the first scenario happens.
If the second and third scenarios happen, the Agreement property of consensus can be violated as members of each sink can solve consensus independently of other processes.

\section{An Impossibility Result}\label{sec:problem:specification}
The primary objective of this paper is to extend the BFT-CUP model to enable solving consensus when each process only knows its PD but does not know $f$.
We achieve this by finding the sufficient knowledge connectivity requirements for solving consensus in this setting.

The initial step for finding such requirements involves addressing the following question:
\emph{in partially synchronous systems, is having a knowledge connectivity graph that satisfies the requirements of the BFT-CUP model sufficient to solve consensus when the fault threshold is unknown?} 
That is, in a partially synchronous system, can processes solve consensus when each process~$i$ initially has access to its participant detector $\mathit{PD}_i$, the sets returned by participant detectors of processes collectively form a knowledge connectivity graph $G_{\mathit{di}} \in \mathcal{G}_{\mathit{di}}$, and no correct process knows the value of $f$?
Note that the system still has a fault threshold as required to define~$G_{\mathit{di}}$, but it is unknown.
We negatively answer the above question by presenting a theorem that relies on the indistinguishability technique (e.g., used in ~\cite{flp, waitFree, heydari_hard}).

\begin{figure}[t!]
    \centering
    \begin{subfigure}[t]{0.22\textwidth}
        \centering
        \includegraphics[scale=0.8]{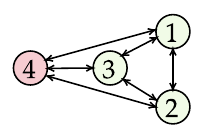}
        \caption{System $A$: a $2$-OSR PD in which only process $4$ is faulty.}
        \label{fig:imp:a}
    \end{subfigure}%
    \hfill
    \begin{subfigure}[t]{0.22\textwidth}
        \centering
        \includegraphics[scale=0.8]{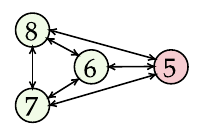}
        \caption{System $B$: a $2$-OSR PD in which only process $5$ is faulty.}
        \label{fig:imp:b}
    \end{subfigure}
    \\
    \begin{subfigure}[t]{0.45\textwidth}
        \centering
        \includegraphics[scale=0.8]{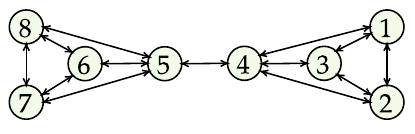}
        \caption{System $\mathit{AB}$: a $1$-OSR PD in which all processes are correct.}
        \label{fig:imp:c}
    \end{subfigure}
    
    \caption{Processes in $\{1,2,3\}$ cannot distinguish between case~a (if process~$4$ remains silent) and case~c (if process~$4$ is slow). 
    Likewise, Processes in $\{6,7,8\}$ cannot distinguish cases b and~c.}
    \label{fig:imp}
\end{figure}

\begin{theorem}\label{thm:kosr:impossibility}
    In partially synchronous systems, a knowledge connectivity graph belonging to~$\mathcal{G}_\mathit{di}$ is insufficient to solve Byzantine consensus when the fault threshold is unknown. 
\end{theorem}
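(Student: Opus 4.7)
The plan is to argue by contradiction via an indistinguishability construction, following the sketch already outlined before the theorem and instantiated by the three graphs in Fig.~\ref{fig:imp}. Suppose, toward a contradiction, that there exists a deterministic algorithm $\mathcal{A}$ that solves Byzantine consensus in every partially synchronous system whose knowledge connectivity graph belongs to $\mathcal{G}_{\mathit{di}}$, where each process~$i$ receives only $\mathit{PD}_i$ as input and no process is told the fault threshold. I would fix the three configurations of Fig.~\ref{fig:imp}: a system~$A$ on processes $\{1,\dots,5\}$ whose graph is the $2$-OSR PD of Fig.~\ref{fig:imp:a} with fault threshold $f_A=1$ and unique Byzantine process~$4$; a symmetric system~$B$ on processes $\{4,\dots,8\}$ with fault threshold $f_B=1$ and Byzantine process~$5$; and the combined system~$\mathit{AB}$ of Fig.~\ref{fig:imp:c}, which is a $1$-OSR PD on $\{1,\dots,8\}$ with fault threshold $f_{\mathit{AB}}=0$ (all processes correct). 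In all three systems the knowledge connectivity graph belongs to $\mathcal{G}_{\mathit{di}}$, so by assumption $\mathcal{A}$ must solve consensus in each.

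Next I would define the input executions. Let $v_A \neq v_B$ be two valid values. In system~$A$, all correct processes propose $v_A$; in system~$B$, all correct processes propose $v_B$; in system~$\mathit{AB}$, processes in $\{1,2,3\}$ propose $v_A$ and processes in $\{6,7,8\}$ propose $v_B$ (the inputs of $4$ and $5$ can be fixed arbitrarily to valid values). In the execution of~$A$, I let the Byzantine process~$4$ be permanently silent; by Validity and Termination there exists a finite time $t_A$ at which every process in $\{1,2,3\}$ has decided, and by Validity each must decide $v_A$. Analogously, in~$B$, the Byzantine process~$5$ stays silent and there is a time $t_B$ by which processes in $\{6,7,8\}$ have decided~$v_B$.

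The key construction is the execution of~$\mathit{AB}$. Using partial synchrony, I pick a GST strictly greater than $\max(t_A, t_B)$ and let the adversarial scheduler delay, until after GST, every message sent by process~$4$ to $\{1,2,3\}$ and every message sent by process~$5$ to $\{6,7,8\}$; all other messages are delivered as in the executions of~$A$ and~$B$. This schedule is admissible for partial synchrony since only finitely many messages are delayed and all are eventually delivered within bound~$\delta$ after GST. I then verify the crucial indistinguishability: the local histories of processes $\{1,2,3\}$ in $\mathit{AB}$ up to time~$t_A$ coincide with their histories in execution~$A$, because the only way these processes could learn anything about the existence of $\{5,6,7,8\}$ is through messages originating at process~$4$, which have not yet been delivered; symmetrically for $\{6,7,8\}$ with respect to process~$5$. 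Since $\mathcal{A}$ is deterministic, $\{1,2,3\}$ must decide $v_A$ and $\{6,7,8\}$ must decide $v_B$ in $\mathit{AB}$, contradicting Agreement.

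The main obstacle I anticipate is stating the indistinguishability precisely in a way that respects partial synchrony and the signed-PD model. Concretely, I must argue that no process in $\{1,2,3\}$ can obtain, before $t_A$, a correctly signed PD from any process outside $\{1,2,3,4\}$: any such PD would have to traverse process~$4$ (the only bridge), whose outgoing messages to $\{1,2,3\}$ are by construction delayed past $t_A$; and Byzantine process~$4$ in execution~$A$ could in principle have forged nothing on behalf of correct bridges because there is none. Making this forgery-free observation rigorous, together with checking that the three constructed knowledge connectivity graphs indeed meet Definition~\ref{def:osr} with sink sizes $\geq 2f+1$ for the stated thresholds, is the most technical part; the rest is a clean instantiation of the standard partitioning argument.
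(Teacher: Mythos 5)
Your proposal is correct and follows essentially the same three-system partitioning/indistinguishability argument as the paper, which likewise uses the configurations of Fig.~\ref{fig:imp} and makes all cross-partition communication arrive only after the decision times $t_A,t_B$ (the paper simplifies slightly by reducing to crash faults instead of keeping silent Byzantine processes, which is equivalent here). One minor slip: systems $A$ and $B$ must be on the \emph{disjoint} sets $\{1,2,3,4\}$ and $\{5,6,7,8\}$ as in the figures, not the overlapping sets $\{1,\dots,5\}$ and $\{4,\dots,8\}$ you wrote.
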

\begin{proof}
    We establish the theorem for a weaker failure model, specifically crash faults, by assuming that a process fails by crashing. 
    Given that any impossibility result derived for a weaker model holds for a stronger model, an impossibility result drawn for crash faults is also valid for Byzantine faults.
    
    For the sake of contradiction, assume that there is a protocol~$\mathcal{A}$ by which processes can solve consensus when the knowledge connectivity graph of the system belongs to $\mathcal{G}_\mathit{di}$, but no process knows the value of the fault threshold.
    We present three cases, each with a corresponding knowledge connectivity graph. 
    It is straightforward to validate that each knowledge connectivity graph belongs to $\mathcal{G}_\mathit{di}$.
    \begin{enumerate}[label=\alph*), leftmargin=1.3em]
        \item Consider a distributed system $A$ composed of a set of processes $\Pi_A=\{1,2,3,4\}$ with the knowledge connectivity graph depicted in Fig.~\ref{fig:imp:a}.
        Except for process $4$, other processes are correct.    
        Assume that the initial value of every process is $v$.
        Additionally, assume that the GST occurs at most by time $t_A$ in this system (in accordance with the definition of partial synchrony, making such an assumption is possible).
        By assumption, processes $1,2,$ and $3$ must be able to solve consensus using $\mathcal{A}$.
        Let~$E_A$ be an execution of $\mathcal{A}$ with duration $\Delta_A$, resulting in deciding~$v$ due to the Validity property of consensus.
    
        \item Similar to the previous case, consider a distributed system~$B$ composed of a set of processes $\Pi_B=\{5,6,7,8\}$ with the knowledge connectivity graph depicted in Fig.~\ref{fig:imp:b}.
        Except for process $5$, other processes are correct.    
        Assume that the initial value of every process is $u$.
        Additionally, assume that the GST occurs at most by time $t_B$ in this system.
        By assumption, processes $6,7,$ and~$8$ must be able to solve consensus using $\mathcal{A}$.
        Let $E_B$ be an execution of~$\mathcal{A}$ with duration $\Delta_B$, resulting in deciding~$u$ due to the Validity property of consensus.   
    
        \item Consider a distributed system $\mathit{AB}$ composed of eight processes $\Pi_A\cup \Pi_B = \{1,2,\dots, 8\}$ with the knowledge connectivity graph depicted in Fig.~\ref{fig:imp:c}.
        Assume that all processes are correct and the initial value of each member of $\Pi_A$ (resp. $\Pi_B$) is $v$ (resp. $u$). 
        Furthermore, assume that the communication delays between any two members of $\{1,2,3\}$ (resp. $\{6,7,8\}$) are the same as the communication delays in system~$A$ (resp. $B$). 
        However, any message sent between any other two processes will be received after $\mathtt{max}\{t_A+\Delta_{A},t_B+ \Delta_{B}\}$.
        Note that processes $1,2,$ and $3$ cannot distinguish cases a and c, so they must decide~$v$.
        Likewise, processes $6,7,$ and $8$ cannot distinguish cases~b and~c, so they must decide~$u$.
        Thus, the Agreement property is violated.
    \end{enumerate}
    
    The violation of the Agreement property in the third case implies a contradiction. 
    Consequently, our assumption that states there is a protocol by which processes can solve consensus when the fault threshold is unknown in a system with a knowledge connectivity graph belonging to~$\mathcal{G}_\mathit{di}$ is incorrect, completing the proof.
\end{proof}

The above theorem describes executions where, before deciding a value, no process in~$\{1,2,3\}$ can discover a process in~$\{6,7,8\}$ and vice versa.
Note that in system~$\mathit{AB}$, processes in~$\{1,2,3\}$ (resp.~$\{6,7,8\}$) can consider themselves as sink members since they are sink members in system~$A$ (resp. system~$B$), and they cannot make a distinction between systems~$A$ and $\mathit{AB}$ (resp. systems~$B$ and $\mathit{AB}$).
That is when processes do not know $f$, since $\mathtt{isSink}_{G_{\mathit{di}}}(1, \{1,2,3\}, \{4\}) = \mathit{true}$, and $\mathtt{isSink}_{G_{\mathit{di}}}(1, \{6,7,8\}, \{5\}) = \mathit{true}$, sets $\{1,2,3,4\}$ and $\{5,6,7,8\}$ are sinks.
This result gives rise to the following observation:

\begin{observation}\label{observation1}
    In partially synchronous systems, when the fault threshold is unknown, and the knowledge connectivity graph belongs to~$\mathcal{G}_\mathit{di}$,
    if there exists a natural number $g$ and two subsets of processes~$S_1$ and $S_2$ such that $\mathtt{isSink}_{G_{\mathit{di}}}(g, S_1, S_2) = \mathit{true}$, then processes in $S_1$ consider $S_1 \cup S_2$ as the sink members. 
\end{observation}

Note that a subset of non-sink members might also declare themselves as a sink.
For example, in Fig.~\ref{fig:imp:ns:b}, processes in $\{1, 2,3,4,6\}$ that are non-sink members can declare themselves as a sink.
Specifically, $\mathtt{isSink}_{G_{\mathit{di}}}(2, \{1, 2,3,4,6\}, \{5,7\}) = \mathit{true}$.
If that happens, the Agreement property of consensus can be violated as members of each sink can solve consensus independently of other processes.

\begin{figure}[t!]
    \centering
    \begin{subfigure}[t]{0.26\textwidth}
        \centering
        \includegraphics[scale=0.8]{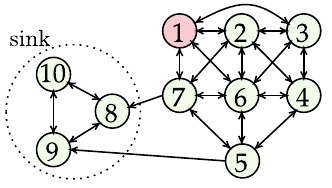}
        \caption{System $A$: a 2-OSR PD in which only process $1$ is faulty.}
        \label{fig:imp:ns:b}
    \end{subfigure}
    \hfill
    \begin{subfigure}[t]{0.2\textwidth}
        \centering
        \includegraphics[scale=0.8]{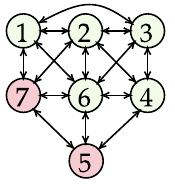}
        \caption{System $B$: a 3-OSR PD in which processes $5$ and $7$ are faulty.}
        \label{fig:imp:ns:a}
    \end{subfigure}%
    
    \caption{Processes in $\{2,3,4,6\}$ cannot distinguish between case~a (if processes $5$ and $7$ remain silent) and case~b (if process $1$ behaves like correct processes but processes~$5$ and~$7$ are slow).}
    \label{fig:imp:non:sink}
\end{figure}

This impossibility result indicates that a new type of knowledge connectivity graph is required for processes to solve consensus using only their PDs and without knowing~$f$. 
In the next section, we achieve this by modifying a knowledge connectivity graph that meets the BFT-CUP model's requirements by adding new edges or removing existing ones.


\section{The BFT-CUPFT Model}\label{sec:extended:bft:cup}
The primary challenge in solving consensus in the BFT-CUPFT model, when the initial knowledge of processes collectively forms a knowledge connectivity graph belonging to~$\mathcal{G}_{\mathit{di}}$, lies in the possibility of 
\begin{enumerate*}[label=(\alph*)]
    \item existing multiple subsets of processes, each identifying itself as a sink, and
    \item having some correct processes that cannot identify whether they are sink or non-sink members, thereby they cannot solve consensus.
\end{enumerate*}
Therefore, to enable processes to solve consensus in the BFT-CUPFT model, it is crucial to prevent the emergence of multiple sinks, along with ensuring that each process can identify whether it is a sink or non-sink member.
This objective is accomplished in this section by adding new edges to, or removing the existing ones from, the knowledge connectivity graphs belonging to~$\mathcal{G}_{\mathit{di}}$, thereby creating a new type of knowledge connectivity graph.
Before introducing such a graph, we present a few definitions.

\vspace{0.2em}
\noindent\textbf{Defining a sink without a known fault threshold.}
Recall from Observation~\ref{observation1} that when processes lack information about the fault threshold, if there exists a natural number $g$, along with two subsets of processes~$S_1$ and~$S_2$, such that $\mathtt{isSink}_{G_{\mathit{di}}}(g, S_1, S_2) = \mathit{true}$, then processes in~$S_1$ consider set~$S_1 \cup S_2$ as members of a sink. 
When the fault threshold is unknown, we define a subset of processes~$S$ as a sink if:
\begin{align*}
    &\mathtt{isSink}_{G_{\mathit{di}}}^*(S) \iff \exists g \geq 0, \exists S_1, S_2 \subseteq S: S_1 \cup S_2 = S \ \land 
    \\& \quad \mathtt{isSink}_{G_{\mathit{di}}}(g, S_1, S_2).
\end{align*}

When a set~$S = S_1 \cup S_2$ is identified as a sink, we denote $\mathtt{f}_{G_{\mathit{di}}}(S)$ as the maximum value of $g$ that satisfies $\mathtt{isSink}_{G_{\mathit{di}}}(g, S_1, S_2)$.
Additionally, $\mathtt{k}_{G_{\mathit{di}}}(S)=\mathtt{f}_{G_{\mathit{di}}}(S)+1$ represents the connectivity of $S$.

\begin{figure}[t!]
    \centering
    \begin{subfigure}[t]{0.44\textwidth}
        \centering
        \includegraphics[scale=0.77]{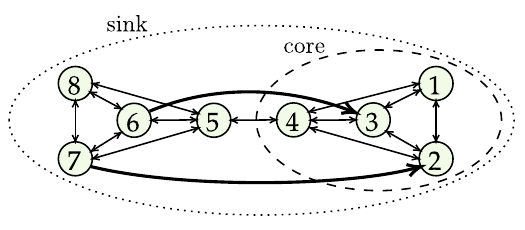}
        \caption{In this knowledge connectivity graph, the sink component differs from the core component.
        }
        \label{fig:rich:k:osr}
    \end{subfigure}
    \\
    \begin{subfigure}[t]{0.44\textwidth}
        \centering
        \includegraphics[scale=0.8]{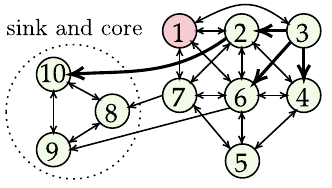}
        \caption{In this knowledge connectivity graph, the sink component is the same as the core component.}
        \label{fig:rich:k:osr:2}
    \end{subfigure}
    
    \caption{Two knowledge connectivity graphs that satisfy the requirements of the BFT-CUPFT model.}
    \label{fig:rich:k:osr:all}
\end{figure}

\vspace{0.2em}
\noindent\textbf{New type of knowledge connectivity graph.}
We define a new type of knowledge connectivity graph in which multiple subsets of processes might consider themselves as sinks.
However, only members of a distinguished sink called the \textit{core} can solve consensus. 

\begin{definition}[Extended $k$-OSR PD]\label{def:eosr}
A knowledge connectivity graph $G_{\mathit{di}}=(V_{\mathit{di}},E_{\mathit{di}})$ belongs to extended $k$-OSR PD if it satisfies the following properties:
\begin{itemize}
    \item It belongs to $k$-OSR PD.
    \item There is a subgraph~$G_{\mathit{core}}=(V_{\mathit{core}},E_{\mathit{core}})$, namely the \textit{core}, that satisfies the following properties:
    \begin{enumerate}[label=C\arabic*), leftmargin=2em]
        %
        \item The core has the maximum connectivity among the sinks.
        Specifically, if any subset of processes other than the core members is considered as a sink, then its connectivity is less than the core's connectivity, i.e., 
        $\forall V \subseteq V_\mathit{di} : V \neq V_{\mathit{core}} \land \mathtt{isSink}_{G_{\mathit{di}}}^*(V) \Rightarrow \mathtt{k}_{G_{\mathit{di}}}(V_{\mathit{core}}) > \mathtt{k}_{G_{\mathit{di}}}(V)$.
        \item From any process~$i\notin V_{\mathit{core}}$ to any process~$j\in V_{\mathit{core}}$ there are at least $\mathtt{k}_{G_{\mathit{di}}}(V_{\mathit{core}})$ node-disjoint paths.
    \end{enumerate}   
\end{itemize}    
\end{definition}

We now provide an insight into the properties of the extended $k$-OSR PD: 
\begin{itemize}
    \item Property C1 ensures that there is only one core, i.e., multiple sinks cannot declare themselves as a core in a graph belonging to the extended $k$-OSR PD.
    Indeed, the core can be uniquely identified due to its maximum connectivity among the sinks.
    Furthermore, this property ensures that $\mathtt{k}_{G_{\mathit{di}}}(V_{\mathit{core}}) \geq k$. 
    Since a graph belonging to the extended $k$-OSR PD belongs also to the $k$-OSR PD, the graph has at least one sink with connectivity~$k$.
    As there is no sink with a connectivity greater than the core, $\mathtt{k}_{G_{\mathit{di}}}(V_{\mathit{core}}) \geq k$.
    \item Property C2 ensures that the core is inside the sink of~$G_{\mathit{di}}$.
    Note that if the core is outside of the sink of $G_{\mathit{di}}$, then the correct sink members of $G_{\mathit{di}}$ must have at least one path to the core due to this property, which is not possible.
    Additionally, this property ensures that non-core members can discover the core, like the non-sink members that discover the sink in a graph belonging to $k$-OSR PD.
    %
\end{itemize}

When a knowledge connectivity graph satisfies the following two properties, we say that it satisfies the requirements of the BFT-CUPFT model: 
\begin{enumerate*}[label=(\alph*)]
    \item its safe subgraph $G_{\mathit{safe}}$ belongs to the extended $(f+1)$-OSR PD, and 
    \item the core component of $G_{\mathit{safe}}$ contains at least $2f+1$ processes.
\end{enumerate*}

The graphs depicted in Fig.~\ref{fig:rich:k:osr:all} are examples of knowledge connectivity graphs that satisfy the requirements of the BFT-CUPFT.
The graph of Fig.~\ref{fig:rich:k:osr} shows an example of how processes in $\{5,6,7,8\}$ were prevented from mistakenly identifying themselves as a sink by adding two extra links from process~$6$ to process~$3$ and from process~$7$ to process~$2$.
In the resulting graph, processes in $\{5,6,7,8\}$ cannot identify themselves as a sink, as they discover the existence of processes~$1$,~$2$,~$3$, and~$4$ even if process~$5$ is slow.
Similarly, the graph of Fig.~\ref{fig:rich:k:osr:2} shows an example of how processes in $\{1,\dots, 7\}$ were prevented from mistakenly identifying themselves as a sink.

\section{Consensus in the BFT-CUPFT Model}\label{sec:consensus:extended:bft:cup}
This section presents a protocol that solves consensus in the BFT-CUPFT model. 
Recall that in Section~\ref{sec:revisiting:bft:cup}, we presented three algorithms~-- Discovery, Sink, and Consensus~-- to solve consensus in the authenticated BFT-CUP model. 
Since processes can execute the Discovery and Consensus algorithms without knowing the fault threshold, we can employ them in scenarios where the fault threshold is unknown.
Since there is only one core in a knowledge connectivity graph that satisfies the requirements of the BFT-CUPFT model, we design an algorithm called Core that allows processes to discover the core component.
We then use the Core algorithm instead of the Sink algorithm in the Consensus algorithm for solving consensus in the BFT-CUPFT model.

\vspace{0.2em}
\noindent\textbf{The Core algorithm in the BFT-CUPFT model.}
The objective of the Core algorithm in the BFT-CUPFT model is to allow each process to discover the core members.
In further detail, each process expands the set of processes it knows by executing the Discovery algorithm.
It executes the $\mathtt{discovery}$ task until it identifies the core component.
Afterward, any correct process $i$ terminates by returning a set that contains all core members.
The following theorem determines how a process can identify whether a set of processes is the core.

\begin{theorem}\label{thm:sink:alternative:def:wf}
    In a knowledge connectivity graph that satisfies the requirements of the BFT-CUPFT model, a subgraph with the node set~$V_{\mathit{core}}$ is the core if
    \begin{enumerate*}[label=(\alph*)]
        \item $\mathtt{isSink}_{G_{\mathit{di}}}^*(V_{\mathit{core}}) = \textit{true}$, and 
        \item there is no set $V\subset V_{\mathit{core}}$ such that $\mathtt{isSink}_{G_{\mathit{di}}}^*(V) = \textit{true}$ and $\mathtt{k}_{G_{\mathit{di}}}(V) \geq \mathtt{k}_{G_{\mathit{di}}}(V_{\mathit{core}})$.
    \end{enumerate*}
\end{theorem}

\ifbool{extendedVersion}{
The proof of the above theorem and the other theorems presented in this section can be found in Appendix~\ref{appendix:consensus:bft:cupft}.
}{}
The Core algorithm is shown in Algorithm~\ref{alg:sink}.
This algorithm provides only one function, $\mathtt{core}$.
Upon initiation, the algorithm forks the $\mathtt{discovery}$ task, allowing concurrent execution of the Discovery algorithm. 
Process $i$ waits until the properties specified in Theorem~\ref{thm:sink:alternative:def:wf} are satisfied for a subset of processes. 
Specifically, it waits until there exists two sets $S_1$ and $S_2$ with the conditions specified in line~\ref{line:condition:core}.
These conditions are analogous to the properties specified in Theorem~\ref{thm:sink:alternative:def:wf}.
The Core algorithm satisfies the properties specified in the following theorem.
\begin{theorem}\label{thm:sink:termination:wf}
    In a knowledge connectivity graph that satisfies the requirements of the BFT-CUPFT model, 
    Algorithm~\ref{alg:sink} executed by any correct process 
    \begin{enumerate*}[label=(\alph*)]
        \item eventually terminates and
        \item returns all core members.
    \end{enumerate*}
\end{theorem}
Finally, in order to solve consensus in the BFT-CUPFT model, we use the Core algorithm instead of the Sink algorithm in the Consensus algorithm outlined in Algorithm~\ref{alg:consensus}.
The following theorem states that if a knowledge connectivity graph satisfies the requirements of the BFT-CUPFT model, it is sufficient to allow processes to solve consensus.
\begin{theorem}\label{thm:last}
    In a knowledge connectivity graph that satisfies the requirements of the BFT-CUPFT model, 
    processes can solve consensus by executing the Core algorithm instead of the Sink algorithm in Algorithm~\ref{alg:consensus}.
\end{theorem}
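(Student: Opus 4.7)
The plan is to mirror the proof of Theorem~\ref{thm:consensus}, with the core component $P$ taking the role previously played by the sink. The key enabling observation is that although processes no longer know $f$, Theorems~\ref{thm:sink:termination:wf} and~\ref{thm:sink:sink:wf} guarantee that $\mathtt{core}()$ terminates at every correct process and returns all core members; combined with the uniqueness of the core granted by Definition~\ref{def:eosr}, every correct process obtains the same set $S = P$. After this common set has been established, the remainder of Algorithm~\ref{alg:consensus} proceeds exactly as in the authenticated BFT-CUP case.

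I would then split the analysis into two cases mirroring the algorithm's branches. For a correct core member $i$, the call $\mathtt{consensus}.\mathtt{propose}(S, v)$ is executed among the members of $S$; by Theorem~\ref{thm:bft:cupft}, $|S| \ge 2f+1$ with at most $f$ Byzantine members, which meets the resilience bound required by a standard partially synchronous BFT consensus protocol such as PBFT. Under partial synchrony this yields Agreement, Validity, and Termination among all correct core members, producing a common valid decision $v^\star$. For a correct non-core member $i$, the algorithm sends $\langle\textsc{GetDecidedVal}\rangle$ to every member of $S$ over the reliable point-to-point channels, which are usable because, after $\mathtt{core}()$ returns, $\mathcal{S}_\mathit{known}$ contains every core member. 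Since $|S| \ge 2f+1$, every correct core member eventually decides $v^\star$ and responds; because $|S| - f \ge \lceil(|S|+1)/2\rceil$, the wait condition is eventually met, giving Termination. Moreover, $\lceil(|S|+1)/2\rceil \ge f+1$, so any collection of $\lceil(|S|+1)/2\rceil$ identical responses must include at least one correct core member, forcing $i$ to adopt $v^\star$ and thereby preserving Agreement and Validity.

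The main obstacle I anticipate is ruling out that Byzantine members of the core collude to satisfy the wait condition of a non-core member with a spurious value. This hinges precisely on the quorum inequality $\lceil(|S|+1)/2\rceil \ge f+1$: at most $f$ Byzantine responders cannot form a matching quorum on any $v' \neq v^\star$. This is exactly why Theorem~\ref{thm:bft:cupft} demands the core, not merely the sink, to contain at least $2f+1$ processes; without this strengthened requirement the BFT-CUPFT setting would admit a Byzantine majority among responders and break Agreement. A minor secondary concern is that the set returned by $\mathtt{core}()$ could differ across correct processes, but this is precluded by the uniqueness asserted in Definition~\ref{def:eosr} together with the characterization in Theorem~\ref{thm:sink:alternative:def:wf}, which fixes a single pair $(P,y)$ satisfying the core conditions.
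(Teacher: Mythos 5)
Your proposal is correct and follows essentially the same route as the paper, which simply observes that the proof of Theorem~\ref{thm:consensus} carries over with the unique core playing the role of the sink (via Theorems~\ref{thm:sink:termination:wf} and~\ref{thm:sink:sink:wf} and the $2f+1$ bound on the core from Theorem~\ref{thm:bft:cupft}). Your write-up just spells out the case analysis and the quorum inequality $\lceil(|S|+1)/2\rceil \geq f+1$ that the paper leaves implicit.
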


\begin{algorithm}[t!]
\caption{The Core algorithm in the BFT-CUPFT model -- process $i$.}
\label{alg:sink}
\begin{algorithmic}[1]
\NoThen
\NoDo

\STATEx{\hspace{-1.67em}\textbf{function} $\mathtt{core}()$}
    \STATE{\textbf{fork} $\mathtt{discovery}()$}

    \STATE{\textbf{wait until} $\exists g\geq 0, \exists S_1\subseteq \mathcal{S}_\mathit{received}^i, \exists S_2 \subseteq \mathcal{S}_\mathit{known}^i \setminus S_1: \mathtt{isSink}_{G_{\mathit{di}}}(g, S_1, S_2) = \textit{true} \land
    (\forall g'>g, \forall Q_1 \subset S_1,$}
    \STATEx{$\quad \forall Q_2 \subseteq \mathcal{S}_\mathit{known}^i \setminus Q_1 : \mathtt{isSink}_{G_{\mathit{di}}}(g, Q_1, Q_2) = \textit{false})$}\label{line:condition:core}
    \STATE{\textbf{return} $S_1 \cup S_2$}

\end{algorithmic}
\end{algorithm}

\section{Related Work}\label{sec:related:work}
\noindent\textbf{Consensus with Unknown Participants.}
The evolution of the Consensus with Unknown Participants (CUP) problem has involved a series of advancements to accommodate diverse system models. 
Initially, Cavin et al.~\cite{cavin_2004} defined the problem for failure-free asynchronous systems, introducing a participant detector abstraction to provide initial information about system membership. 
The information collectively forms a knowledge connectivity graph, and that work establishes the necessary and sufficient properties that knowledge connectivity graphs must satisfy to solve the CUP problem. 
Subsequently, CUP was addressed in~\cite{cavin_2005} for crash-prone systems using the Perfect ($\mathcal{P}$) failure detector \cite{chandra_1996}. 
As implementing $\mathcal{P}$ requires synchrony, Greve and Tixeuil~\cite{greve_2007} relaxed the assumption to partial synchrony~\cite{dwork_1988} by augmenting the minimum required knowledge, specifically by increasing connections in the knowledge connectivity graph. 
This augmentation is demonstrated to be the minimum to tolerate crash failures without imposing synchrony requirements. 
The latest milestone extended CUP to tolerate Byzantine failures, introducing the BFT-CUP protocol~\cite{alchieri_2008, alchieri_2016}.

\vspace{0.2em}
\noindent\textbf{Consensus in directed graphs.}
Somewhat similar to the CUP model, several studies explore consensus in directed graphs, e.g.,~\cite{biely_2012,vaidya_2012,biely_2018,tseng_2015}.
Nevertheless, these investigations focus on determining the properties of the underlying communication graph to achieve consensus under diverse assumptions. 
For instance, Tseng and Vaidya~\cite{tseng_2015} established the minimal conditions of the underlying communication graph, where a participant $i$ can transmit messages to participant $j$ if a directed edge from $i$ to $j$ exists in the graph; otherwise, $i$ cannot send messages to $j$. 
Typically, these studies assume that the set of participants and the underlying communication graph are known to all participants. 
However, in the CUP model, the communication graph is complete, and the objective is to determine the necessary and sufficient initial knowledge about other participants required to solve consensus without knowing the system's membership.

\vspace{0.2em}
\noindent\textbf{Consensus on heterogeneous quorum systems.}
The exploration of protocols designed for systems where participants can have different trust assumptions, i.e., each participant can trust a subset of participants, originated in~\cite{damgard_2007}. 
Ripple~\cite{schawartz_2014,chase_2018} attempted to leverage this approach to address consensus in the permissionless setting, aiming to establish an efficient blockchain infrastructure. However, the achievement of this goal faced challenges leading to safety and liveness violations~\cite{amores_2020}. In contrast, Stellar~\cite{mazieres_2015,lokhava_2019}, based on the Federated Byzantine Quorum System (FBQS) formally studied later in~\cite{garcia_2018}, successfully achieved this objective.
In this approach, a network of trust emerges from the partial view declared by each participant. Consensus in this network is ensured if it adheres to the \emph{intact set} property, stipulating that all correct participants must form a quorum, and any two quorums formed by correct participants must intersect.

The connection between FBQS and dissemination Byzantine quorum systems~\cite{malkhi_1998} was established in~\cite{garcia_2018} and~\cite{garcia_2019}, showing the construction of a dissemination Byzantine quorum system corresponding to an FBQS. Subsequent work by Losa et al.~\cite{losa_2019} generalized FBQS to Personal Byzantine Quorum System (PBQS), demonstrating that consensus with weaker properties than the intact set is achievable through a \emph{consensus cluster}. Notably, forming a quorum by all correct participants is not mandatory within the consensus cluster.

Cachin and Tackmann~\cite{cachin_2019} extended Byzantine Quorum Systems (BQS)~\cite{malkhi_1998} from the symmetric trust model to the asymmetric model, facilitating a comparison between PBQS and the classical BQS model. Recently, Cachin et al. extended the asymmetric trust model, allowing each participant to make assumptions about the failures of participants it knows and, through transitivity, about failures of participants indirectly known by it~\cite{cachin_2022}.
Li et al.~\cite{li2023quorum} recently introduced heterogeneous quorum systems similar to PBQSs and demonstrated that the two properties~-- quorum intersection and availability~-- are necessary but insufficient to solve consensus.
They introduced the notion of quorum subsumption and established that the three conditions together are sufficient.
Furthermore, Vassantlal et al.~\cite{vassantlal_2023} showed that the Stellar consensus protocol cannot solve consensus when each participant has only the minimum knowledge required to solve consensus, even though BFT-CUP can.

This line of research diverges from the CUP model in the following aspects. 
While these studies presume each process possesses a local fault threshold, the CUP model operates under the assumption of a global fault threshold. 
Furthermore, these studies identify the properties of quorums in order to solve consensus, whereas the CUP model outlines the requirements for knowledge connectivity graphs.


\vspace{0.2em}
\noindent\textbf{Sleepy model.}
The CUP model addresses consensus in partially synchronous systems, accommodating correct or faulty participants. However, this model assumes that correct members remain actively engaged throughout the entire execution, which may be impractical in real-world scenarios.
In contrast, the sleepy model~\cite{momose_2022,pass_2017} introduces a different perspective. In this model, participants in a synchronous system are categorized as either awake or asleep, with awake participants capable of being either faulty or correct. The system's fault tolerance dynamically adjusts as participants transition between awake and asleep states. Crucially, consensus can be achieved if the majority of awake participants are correct at any given time. Moreover, unlike CUP, all participants in this model have knowledge of the system's membership.

\vspace{0.2em}
\noindent\textbf{Consensus using broadcast medium.}
Khanchandani and Wattenhofer~\cite{khanchandani_2021} established the impossibility of solving consensus in non-synchronous systems where participants and the fault threshold are unknown. In their system, \textit{no process initially possesses knowledge about other processes}, and each process utilizes a broadcast medium for communication. Recall that our primary goal in this paper is to solve consensus in partially synchronous systems, where each process lacks information about the system's membership and the fault threshold.
Accordingly, at first glance, our goal might appear contradictory to that impossibility result.
However, each process has knowledge about the existence of a subset of processes in our model (BFT-CUPFT), which is sufficient to discover the core component.
Recall that all processes discover the same core. 
The correct processes within the core can solve consensus and inform other processes about the decided value, resulting in solving consensus by all correct processes.
Hence, that impossibility result does not apply to our work.

\section{Conclusion}\label{sec:conclusion}
We addressed the critical challenge of solving Byzantine fault-tolerant consensus in partially synchronous systems where each participant joins the network by having only partial knowledge about the existence of other participants and without explicit information about the fault threshold. 
We demonstrated that the key challenge arises from the possibility of having multiple disjoint subsets of processes, each solving a distinct instance of consensus, thereby violating the Agreement property of consensus. 
In order to mitigate this issue, we specified the sufficient knowledge connectivity requirements that must be satisfied to allow solving consensus in such settings.

\section*{Acknowledgments}
This work was supported by FCT through 
the Ph.D. scholarship, ref. \href{https://doi.org/10.54499/2020.04412.BD}{2020.04412.BD}, 
the SMaRtChain project, ref. \href{https://doi.org/10.54499/2022.08431.PTDC}{2022.08431.PTDC}, and 
the LASIGE Research Unit, ref. \href{https://doi.org/10.54499/UIDB/00408/2020}{UIDB/00408/2020} and ref. \href{https://doi.org/10.54499/UIDP/00408/2020}{UIDP/00408/2020}.

\ifbool{extendedVersion}{}{}
\bibliographystyle{IEEEtran}
\bibliography{ref}

\section{Appendix}

\subsection{Correctness Proofs for the Discovery Algorithm in the Authenticated BFT-CUP Model}\label{appendix:discovery}
\begin{proof}[Proof of Theorem~\ref{thm:discovery:sink}]
Consider a system with a knowledge connectivity graph $G_{\mathit{di}} \in \mathcal{G}_\mathit{di}$. 
Suppose $V_{\mathit{sink}}$ comprises the sink members of $G_{\mathit{di}}$.
By executing Algorithm~\ref{alg:discovery:known:f}, we need to show that each correct process $i$ eventually
    \begin{enumerate*}[label=(\alph*)]
        \item discovers all correct sink members, i.e., 
        $V_{\mathit{sink}}\cap \Pi_C \subseteq \mathcal{S}_\mathit{known}^i$, and
        \item receives the PDs of all correct sink members, i.e.,
        $V_{\mathit{sink}} \cap \Pi_C \subseteq \mathcal{S}_\mathit{received}^i$.
    \end{enumerate*}
We divide the proof into the following two cases:
\begin{itemize}
    \item Process~$i$ is a correct sink member.
    Let $d_{\mathit{ss}}$ denote the longest distance between any two correct sink members.
    After GST, $i$ sends a \textsc{GetPDs} message to its PD members, and each correct process within $\mathit{PD}_i$ responds by sending its PD to $i$.
    Hence, process $i$ receives the PDs of correct sink processes within a distance of two from itself at most by time $\text{GST} + 2\delta$.
    It then sends a \textsc{GetPDs} message to the processes within a distance of two from itself, and each correct process that receives such a message responds by sending its PD to $i$.
    Consequently, it receives the PDs of correct sink processes within a distance of three from itself at most by time $\text{GST} + 4\delta$.
    Using the same argument, process~$i$ receives the PDs of all correct sink members at most by time $\text{GST} + 2(d_{\mathit{ss}}-1)\delta$, i.e., $V_{\mathit{sink}} \cap \Pi_C \subseteq \mathcal{S}_\mathit{received}^i$.
    Consequently, $ V_{\mathit{sink}} \cap \Pi_C \subseteq \mathcal{S}_\mathit{known}^i$.
    
    %
    \item Process~$i$ is a correct non-sink member.
    Let $d_{\mathit{ns}}$ denote the longest distance between a correct non-sink member and a correct sink member.
    After GST, process~$i$ sends a \textsc{GetPDs} message to its PD members, and each correct process within $\mathit{PD}_i$ responds by sending its PD to $i$.
    Hence, process~$i$ receives the PDs of correct processes within a distance of two from itself at most by time $\text{GST} + 2\delta$.
    It then sends a \textsc{GetPDs} message to the processes within a distance of two from itself, and each correct process that receives such a message responds by sending its PD to $i$.
    Consequently, it receives the PDs of correct processes within a distance of three from itself at most by time $\text{GST} + 4\delta$.  
    Using the same argument, process~$i$ receives the PD of at least one correct sink member~$j$ at most by time $\text{GST} + 2(d_{\mathit{ns}}-1)\delta$, as there are at least $f+1$ node-disjoint paths made by correct processes from $i$ to $j$.
    Hence, $i$ discovers all correct sink members and receives their PDs at most by time $\text{GST} + 2(d_{\mathit{ns}} + d_{\mathit{ss}} - 2)\delta$, i.e., $ V_{\mathit{sink}} \cap \Pi_C \subseteq \mathcal{S}_\mathit{known}^i$, and
    $V_{\mathit{sink}} \cap \Pi_C \subseteq \mathcal{S}_\mathit{received}^i$.
\end{itemize}
\end{proof}

\subsection{Correctness Proofs for the Sink Algorithm in the Authenticated BFT-CUP Model}\label{appendix:sink}
\begin{proof}[Proof of Theorem~\ref{thm:sink:alternative:def}]
    Let $G_{\mathit{di}}=(V_{\mathit{di}},E_{\mathit{di}})$ be a knowledge connectivity graph belonging to $\mathcal{G}_{\mathit{di}}$.
    Assume that $V_{\mathit{sink}}$ contains all sink members of this graph.
    We show that $V_{\mathit{sink}}$ can be partitioned into two sets $S_1,S_2 \subseteq V_{\mathit{di}}$ such that these sets satisfy Properties P1, P2, P3, and P4 for a given fault threshold~$f$.
    To do so, assume set~$S_1$ contains all and only correct sink members and set~$S_2$ contains all and only Byzantine sink members.
    \begin{itemize}
        \item Due to Theorem~\ref{thm:bft:cup}, there are at least $2f+1$ correct processes inside the sink.
        Hence $|S_1|\geq 2f+1$, which means Property P1 is satisfied.
        \item Due to Theorem~\ref{thm:bft:cup}, the strong connectivity of the correct sink members is at least $f+1$.
        Therefore, $\kappa(S_1)\geq f+1$, which means that Property P2 is satisfied.
        \item Note that no correct sink member has a link to non-sink members.
        Besides, the number of Byzantine sink members is bounded by~$f$.
        Consequently, the number of processes in~$S_1$ that have links to~$V_{\mathit{di}}\setminus S_1$ is bounded by~$f$, satisfying Property P3.
        \item Since $S_2$ is a part of the sink, for each process~$i \in S_2$, there are at least $f+1$ members of $S_1$ that have outgoing links to~$i$, satisfying Property P4.
    \end{itemize}

\end{proof}

\begin{proof}[Proof of Theorem~\ref{thm:sink:exists}]
    We need to show that set~$S_1 \cup S_2$ contains all and only the sink members.
    We divide the proof into the following two steps:
    \begin{itemize}
        \item Set~$S_1 \cup S_2$ contains only the sink members.
            We also divide this step into the following two sub-steps:
            \begin{itemize}
                \item $S_1$ contains only the sink members.
                    For the sake of contradiction, assume that $S_1$ does not contain only the sink members.
                    Hence, there is at least a non-sink member~$i \in S_1$.
                    From properties P1 and P2, $|S_1| \geq 2f+1$ and $\kappa(G_{\mathit{di}}[S_1]) \geq f+1$.
                    As there are at most~$f$ Byzantine processes in the system, $S_1$ has at least $f+1$ correct processes.
                    Since correct sink members do not have any link to non-sink members, the correct members of $S_1$ do not have any link to $i$.
                    Hence, $\kappa(G_{\mathit{di}}[S_1]) \not\geq f+1$, which is a contradiction.
                \item $S_2$ contains only the sink members.
                    For the sake of contradiction, assume that $S_2$ does not contain only the sink members.
                    Hence, there is at least a non-sink member~$i \in S_2$.
                    Due to the previous sub-step, all members of $S_1$ are sink members.
                    From property P4, there should be at least $f+1$ processes in~$S_1$ that have outgoing links to~$i$.
                    Since there are at most $f$ Byzantine processes in~$S_1$, at least a correct member of $S_1$ has an outgoing link to~$i$, which is impossible as no correct sink member has a link to a non-sink member.
            \end{itemize}
        \item Set~$S_1 \cup S_2$ contains all sink members.
        For the sake of contradiction, assume that $S_1 \cup S_2$ does not contain all sink members.
        Hence, there is at least a sink member~$i \notin S_1 \cup S_2$.
        From the previous step, $S_1$ contains only the sink members.
        Due to property P1, $|S_1| \geq 2f+1$.
        As there are at most~$f$ Byzantine processes in the system, $S_1$ has at least $f+1$ correct processes.
        Note that $i$ can be either correct or Byzantine.
        \begin{itemize}
            \item If $i$ is correct, as each correct sink member has at least $f+1$ node-disjoint paths made by correct sink members to other correct sink members, each correct member of~$S_1$ has at least $f+1$ node-disjoint paths to~$i$.
            Hence, at least $f+1$ correct members of $S_1$ have links to the outside of $S_1$, violating property P3.
            Accordingly, set $S1 \cup S_2$ contains all correct sink members.
            \item If $i$ is Byzantine and a sink member, there are at least~$f+1$ correct sink members that have outgoing links to~$i$.
            Hence, $i \in S_2$ due to property P4, which is a contradiction.
        \end{itemize}
    \end{itemize}
    From these two steps, it follows that set~$S_1 \cup S_2$ contains all and only the sink members.
\end{proof}

\begin{proof}[Proof of Theorem~\ref{thm:sink:termination}]
    In a knowledge connectivity graph $G_{\mathit{di}} \in \mathcal{G}_{\mathit{di}}$, we need to show that Algorithm~\ref{alg:sink:known:f} executed by any correct process has the following properties:
    \begin{enumerate}[label=(\alph*)]
        \item It eventually terminates.
        From Theorem~\ref{thm:discovery:sink}, any correct process discovers all correct sink members at most by time $t = \mathtt{max}\{\text{GST} + 2(d_{\mathit{ss}} - 1)\delta,\text{GST} + 2(d_{\mathit{ns}} + d_{\mathit{ss}} - 2)\delta\}$.
        We show that the conditions presented in line~\ref{line:sink:condition} within Algorithm~\ref{alg:sink:known:f} are met at time $t$.
        As a consequence, any correct process stops to await further, resulting in the termination of the algorithm.
        
        Since any correct process discovers all sink members at time~$t$, we have $\forall i \in V_{\mathit{sink}} \cap \Pi_C : V_{\mathit{sink}}  \subseteq \mathcal{S}_\mathit{known}^i$.
        Besides, at time~$t$, we have $\forall i \in V_{\mathit{sink}} \cap \Pi_C : V_{\mathit{sink}} \cap \Pi_C \subseteq \mathcal{S}_\mathit{received}^i$.
        Since there are at most $f$ Byzantine sink members that can remain silent, the number of processes that each process $i \in V_{\mathit{sink}} \cap \Pi_C$ knows, but it has not received its PD is limited by $f$.
        Consequently, the two sets $S_1$ and $S_2$ specified in the conditions exist, allowing each process~$i$ to be able to terminate.
        \item It returns all sink members.
        This part directly follows from Theorem~\ref{thm:sink:alternative:def}.
        Specifically, sets $S_1$ and $S_2$ in Algorithm~\ref{alg:sink} are those defined in Theorem~\ref{thm:sink:alternative:def}.
    \end{enumerate}
\end{proof}

\subsection{Correctness Proofs for the Consensus Algorithm in the Authenticated BFT-CUP Model}\label{appendix:consensus}
\begin{proof}[Proof of Theorem~\ref{thm:consensus}]
    From Theorem~\ref{thm:sink:termination}, any correct process discovers the sink eventually.
    Upon discovering the sink, sink members execute a consensus protocol.
    Since there are at least $2f+1$ correct sink members, while there are at most $f$ Byzantine sink members, sink members can solve consensus using a traditional BFT consensus protocol like PBFT~\cite{castro_1999}.
    Recall that non-sink members request the decided value from sink members.
    When a correct sink member decides on a value, it sends the value to non-sink members.
    A non-sink member $i$ decides on a value after receiving a value from $f+1$ sink members.
    This way, process $i$ can ensure that there is at least one correct process among the processes sent the decided value, as there are at most $f$ Byzantine sink members.
    We now show that the properties of consensus specified in Subsection~\ref{subsection:consensus} are satisfied when processes execute Algorithm~\ref{alg:consensus}.
    \begin{itemize}
        \item Validity.
        Since a traditional BFT consensus protocol like PBFT~\cite{castro_1999} satisfies the Validity property, the decided value by correct sink members satisfies this property.
        As any correct non-sink member $i$ for deciding a value requires receiving the decided value by at least a correct sink member, the decided value by $i$ also satisfies the Validity property.
        Hence, the value decided by all correct processes satisfies the validity property.
        \item Agreement.
        Since a traditional BFT consensus protocol like PBFT~\cite{castro_1999} satisfies the Agreement property, the decided value by correct sink members satisfies this property.
        As any correct non-sink member $i$ for deciding a value requires receiving the decided value by at least a correct sink member, the decided value by $i$ must be equal to the sink members, satisfying the Agreement property.
        \item Termination.
        Since a traditional BFT consensus protocol like PBFT~\cite{castro_1999} satisfies the Termination property.
        Hence, any correct sink member will eventually terminate.
        Since there are at least $2f+1$ correct sink members, and any non-sink member $i$ is required to receive $f+1$ messages to terminate, process $i$ eventually terminates. 
    \end{itemize}
    Since all properties of consensus are satisfied, Algorithm~\ref{alg:consensus} solves consensus.
\end{proof}

\subsection{Correctness Proofs Related to Solving Consensus in the BFT-CUPFT Model}\label{appendix:consensus:bft:cupft}
\begin{proof}[Proof of Theorem~\ref{thm:sink:alternative:def:wf}]
    This theorem directly follows from Properties C1 and C2.
\end{proof}

\begin{proof}[Proof of Theorem~\ref{thm:sink:termination:wf}]
    The proof of this theorem is similar to the proof of Theorem~\ref{thm:sink:termination}.
\end{proof}


\begin{proof}[Proof of Theorem~\ref{thm:last}]
    Recall that in Theorem~\ref{thm:consensus}, correct processes can solve consensus due to the existence of a sink in the knowledge connectivity graph.
    As there is only one core in a knowledge connectivity graph that belongs to the BFT-CUPFT model, the proof of this theorem is similar to the proof of Theorem~\ref{thm:consensus}.
\end{proof}

\end{document}